\documentclass[11pt]{article}

\usepackage[margin=1in]{geometry}
\usepackage[round]{natbib}

\newcommand{\backmatter}{}
\newenvironment{keywords}{\par\medskip\noindent\textbf{Key words:}\ }{\par\medskip}

\pdfoutput=1

\usepackage[inline,shortlabels]{enumitem}
\usepackage{float}
\usepackage{mathtools}
\usepackage{graphicx}
\usepackage{tikz}
\usepackage[english]{babel}
\usepackage{longtable}
\usepackage{color}
\usepackage{soul,xcolor}
\usepackage[pdftex]{changebar}
\usepackage{bbm}
\usepackage{amssymb,amsmath,amsthm}
\usepackage{multirow}
\usepackage[titletoc,title]{appendix}
\usepackage{dsfont}
\usepackage{booktabs}
\usepackage{multirow}
\usepackage{caption}
\usepackage[OT1]{fontenc}
\usepackage{subcaption}
\usepackage{refcount}
\usepackage{accents}
\usepackage{thmtools}
\usepackage{changepage} 

\graphicspath{ {../simulations/01_init_manuscript/graphs/} }

\usepackage[colorlinks,citecolor=blue,urlcolor=blue]{hyperref}

\newtheorem{proposition}{Proposition}
\newtheorem{theorem}{Theorem}

\theoremstyle{definition}

{
  \theoremstyle{definition}
  \newtheorem{assumption}{Assumption}{}
}
{
  \theoremstyle{definition}
  
}
{
  \theoremstyle{definition}
  \newtheorem{example}{Example}[section]
}

\DeclareMathOperator{\dr}{IF}

\newcommand{\supp}{\mathop{\mathrm{supp}}}

\newtheorem{lemma}{Lemma} \newtheorem{coro}{Corollary}
\newtheorem{definition}{Definition}

\DeclareMathOperator{\logit}{logit} \DeclareMathOperator{\var}{\mathsf{Var}}
\DeclareMathOperator{\link}{link}
 
\renewcommand{\P}{\mathsf{P}}
\newcommand{\Q}{m}
\newcommand{\rem}{\mathsf{Rem}}
\newcommand{\D}{\varphi}

\renewcommand{\d}{d}

\renewcommand{\r}{g}
\newcommand{\g}{\mathsf{g}}

\newcommand{\indep}{\mbox{$\perp\!\!\!\perp$}} 
 \newcommand{\dd}{\mathrm{d}}

\newcommand{\Pn}{\mathsf{P}_n}

\newcommand{\thetalmtp}{\theta_{\mbox{\scriptsize lmtp}}}

\newcommand{\thetatmle}{\hat\theta_{\mbox{\scriptsize tmle}}}
\newcommand{\thetasr}{\hat\theta_{\mbox{\scriptsize sr}}}
\newcommand{\one}{\mathds{1}}

\newcommand{\E}{\mathsf{E}}

\renewenvironment{proof}{{\it Proof }}{\qed \\}

\DeclarePairedDelimiterX{\norm}[1]{\lVert}{\rVert}{#1}

\usepackage{ulem}
\newcommand\redsout{\bgroup\markoverwith{\textcolor{red}{\rule[0.5ex]{2pt}{2.0pt}}}\ULon}
\usetikzlibrary{arrows.meta, positioning, arrows,shapes.arrows,shapes.geometric,shapes.multipart, decorations.pathmorphing,positioning,shapes.swigs}
\newcommand{\titlepaper}{Modified treatment policies that depend on
  the natural history of treatment.}

\date{\today}
\pdfoutput=1
\newcommand{\biomauthorblock}{%
Iv\'an D\'iaz$^{1,*}$, Nicholas T. Williams$^2$, Pawe{\l}
Morzywo{\l}ek$^3$, and Kara E. Rudolph$^4$\\
$^1$Division of Biostatistics, Department of Population Health, New York University Grossman School of Medicine\\
$^2$Division of Biostatistics, School of Public Health, University of
California at Berkeley\\
$^3$Section of Biostatistics, Department of Public Health, University of Copenhagen\\
$^4$Department of Epidemiology, Mailman School of Public Health, Columbia University\\
$^*$Corresponding author: ivan.diaz@nyu.edu}

\allowdisplaybreaks

\title{\titlepaper}
\author{\begin{minipage}{0.95\textwidth}\centering\small \biomauthorblock\end{minipage}}
\date{\today}

\begin{document}
\label{firstpage}
\maketitle

\begin{abstract}
  Longitudinal modified treatment policies (LMTP) are a class of
  interventions that allow the definition, identification, and
  estimation of causal effects in general settings, such as with
  continuous or multivariate exposures, treatment regimens that
  require grace periods.  Targeted machine learning estimators (i.e.,
  double/debiased) have been formulated for LMTPs that assign the
  exposure at time $t$ as a function of the natural value of treatment
  at time $t$. However, important applications such as estimating the
  effect of a delay in the start of a treatment require formulating
  LMTPs that depend not only on the natural value of treatment at time
  $t$ but also on the \textit{history} of the natural value of
  treatment prior to time $t$. This paper develops targeted learning
  estimators for this general case. We discuss the definition of the
  effects, and propose estimators that use an augmented-data version
  of the sequential regression form of the longitudinal g-computation
  formula. Our estimators are based on the efficient influence
  function and provide $\sqrt{n}$ inference under standard doubly
  robust rate assumptions on the convergence of the outcome and
  treatment regressions. We apply the new estimators to assess the
  effect of delaying a risky pain treatment by one month on 12-month
  incidence of opioid use disorder.
\end{abstract}

\begin{keywords}
  Modified treatment policies, natural value of treatment,
  longitudinal g-computation formula.
\end{keywords}

\maketitle
%%%%%%%%%%%%%%%%%%%%%%%%%%%%%%%%%%%%%%%%%%%%%%%%%%%%%%%%%%%%%%%%%%%%%%%%%%%%%%% 
\section{Introduction}
%%%%%%%%%%%%%%%%%%%%%%%%%%%%%%%%%%%%%%%%%%%%%%%%%%%%%%%%%%%%%%%%%%%%%%%%%%%%%%% 

Modified treatment policies (MTPs), also known as dynamic
interventions that depend on the natural value of treatment
\citep{Diaz12, haneuse2013estimation, richardson2013single,
  young2014identification, diaz2023nonparametric,
  hoffman2024studying}, provide a flexible way to define, identify,
and estimate causal effects in real-world situations such as
continuous or multivariate exposures \citep{jiang2025ventilator},
limited overlap or practical positivity violations
\citep{wen2023intervention}, grace periods or delays in treatment
initiation \citep{wanis2024grace}, longitudinal studies with
time-varying treatments, censoring, and competing risks
\citep{diaz2024survival}, resource-constrained treatment policies
\citep{sarvet2023longitudinal}, and policy interventions with
spillovers or network interference \citep{balkus2026network}. \cite{sarvet2025natural} provide a thorough review of the literature on methods involving the natural value of treatment.%They
%have been applied to policy-relevant questions where deterministic
%interventions are unrealistic, including shifting mobility patterns
%during COVID-19 \citep{nugent2023mtp_mobility}, modifying service
%intensity in public health nurse home visiting
%\citep{huling2022nurse_home_visiting}, evaluating feasible reductions
%in knee pain to prevent knee replacement
%\citep{jafarzadeh2022pain_knee_replacement}, emulating dental and
%oral-health preservation scenarios
%\citep{cooray2023teeth_social_participation,tay2025masticatory_mortality},
%improving medication adherence \citep{qiu2024metformin_adherence}, and
%assessing policy-relevant environmental shifts such as projected heat
%exposure \citep{barbalat2025heat_language}.

Longitudinal modified treatment policies were originally formulated
broadly enough to allow the intervention at time $t$ to depend on the
history of natural treatment values up to time $t$
\citep{richardson2013single}. %For times $s \leq t$, t
The natural treatment value at time $t$ is the treatment that would
have been observed had the intervention been stopped immediately
before treatment at time $t$ was assigned. Thus, in the original
formulation, the policy at time $t$ may depend both on the
contemporaneous natural treatment value and on earlier natural
treatment values. In contrast, subsequent work on inverse probability
weighted \citep{young2014identification} and targeted learning
estimators \citep{diaz2023nonparametric} focused on the narrower class
of policies in which the intervention at time $t$ depends only on the
contemporaneous natural treatment value. This restriction rules out
policies whose treatment rule depends on the natural values of
treatment at previous timepoints, such as when the intervention
modifies the timing of a naturally occurring treatment process rather
than only the contemporaneous treatment value. For example, in
longitudinal policy evaluation, one may wish to estimate the effect of
delaying the natural date of policy adoption, such as shifting the
enactment time of a naloxone access law \citep{rudolph2022effects}. In
this case, treatment at time $t$ depends on the natural policy status
at an earlier time, or equivalently on the natural adoption
time. %Similarly, consider a dose escalation policy that caps the dose escalation at an increment of at maximum $\delta$ units every time. %would leave the dose unchanged unless the clinician would naturally increase the dose by more than $\delta$ since the previous visit, in which case it would cap the increase at $\delta$.
%Such a dose escalation policy is best represented as a function of the natural dose at time $t$ and at time $t-1$. 

For longitudinal modified treatment policies that depend only on the contemporaneous natural value of treatment, \cite{diaz2023nonparametric} developed targeted machine learning estimators based on efficient influence function that achieve $\sqrt{n}$-consistency under slow convergence rates for the nuisance regressions, including the outcome regressions and treatment mechanisms. These estimators rely on the fact that, for this class of interventions, the generalized g-computation formula of \cite{richardson2013single} admits a standard iterated conditional expectation representation \citep{Bang05,luedtke2017sequential}.  We show that this representation fails for longitudinal modified treatment policies that depend on the natural history of treatment. %In this setting, the g-computation formula does not correspond to a standard iterated conditional expectation. 
We therefore introduce a modified iterated conditional expectation formula on an augmented data structure and prove that it equals the corresponding g-computation functional. We then derive the efficient influence function of the resulting nonparametric functional and characterize conditions for $\sqrt{n}$-consistent estimation. These results yield two extensions of the estimators of \cite{diaz2023nonparametric}: a targeted minimum loss-based estimator and a sequentially doubly robust estimator. Like their original counterparts, our estimators are asymptotically efficient under a product-rate condition requiring certain regression errors to converge to zero at $\sqrt{n}$-rate in $L_2$ norm. We assess finite-sample performance in simulations and illustrate the methods in an application estimating the effect of delaying a risky pain treatment by one month on 12-month incidence of opioid use disorder or overdose diagnosis among 12,745 Medicaid-insured patients with lumbar spinal stenosis.

% The rest of the manuscript is organized as follows...

%%%%%%%%%%%%%%%%%%%%%%%%%%%%%%%%%%%%%%%%%%%%%%%%%%%%%%%%%%%%%%%%%%%%%%%%%%%%%%% 
\section{Notation and definition of causal effects}\label{sec:nota}
%%%%%%%%%%%%%%%%%%%%%%%%%%%%%%%%%%%%%%%%%%%%%%%%%%%%%%%%%%%%%%%%%%%%%%%%%%%%%%% 
Let $Z_1,\ldots, Z_n$ denote i.i.d. observations with
$Z=(L_1, A_1, L_2, A_2, \ldots, L_\tau, A_\tau, Y)\sim \P$, where
$L_t$ denotes time-varying covariates, $A_t$ denotes a discrete
exposure variable, and $Y$ denotes an outcome of interest measured at
the end of study follow-up. We let $\Pn$ to denote the empirical
distribution of $Z_1,\ldots\,Z_n$, and assume $\P$ is an element of
the non-parametric statistical model defined as all continuous
densities on $Z$ with respect to a dominating measure $\nu$. We let
$\P f = \int f(z)\dd \P(z)$ for a given function $f(z)$, and let $\E$
denote the expectation with respect to $\P$, i.e.,
$\E\{f(Z)\} = \int f(z)\dd\P(z)$. We use $||f||^2$ to denote the
$L_2(\P)$ norm $\int f^2(z)\dd\P(z)$.  $\bar X_t = (X_1,\ldots, X_t)$
will be used to denote the history of a variable,
$\underline X_t = (X_t,\ldots, X_\tau)$ to denote the future of a
variable, and $H_t = (\bar A_{t-1}, \bar L_t)$ to denote the history
of all variables up until just before $A_t$. When useful, we will also
use notation $\bar X_{t:k} = (X_t,\ldots X_k)$. We let
$\g_t(a_t \mid h_t)$ denote the probability mass function of $A_t$
conditional on $H_t=h_t$, and use calligraphic font to denote the
support of a random variable, e.g., $\mathcal A_t$ denotes the support
of $A_t$. Causal effects are defined relying on counterfactual
logic. Counterfactual random variables will be formally defined using
a non-parametric structural equation model
\citep{Pearl00}. Specifically, for each timepoint $t$, we assume the
existence of deterministic functions $f_{L_t}$, $f_{A_t}$, and $f_Y$
such that $L_t=f_{L_t}(A_{t-1}, H_{t-1}, U_{L,t})$,
$A_t=f_{A_t}(H_t, U_{A,t})$, and $Y=f_Y(A_\tau, H_\tau, U_Y)$, where
$U=(U_{L,t}, U_{A,t}, U_Y:t\in \{1,\ldots,\tau\})$ is a vector of
exogenous variables with a distribution satisfying assumptions that
will be given in \S\ref{sec:iden}. Unless defined otherwise, sums of the type $\sum_{k=s}^ta_k$ and products of the type
$\prod_{k=s}^ta_k$ for $s>t$ are defined as zero and one.

\subsection{Longitudinal modified treatment policies that  only depend on the contemporaneous natural value of treatment}\label{sec:concurrent}

We first revisit the definition of LMTPs used in
\cite{diaz2023nonparametric} to contrast it with the more general
definition we use in this manuscript
(\S\ref{sec:hmtp}). \cite{diaz2023nonparametric} defined LMTP effects
in terms of hypothetical interventions where the equation
$A_t=f_{A_t}(H_t, U_{A,t})$ is removed from the structural model, and
the exposure is assigned as follows.

For a user-defined function $d_1$, at the first timepoint, $t=1$, the
post-intervention treatment is defined as a transformation
$A_1^\d =\d_1(A_1, L_1)$ of the natural value of treatment at that
time, $A_1$, and the history of data collected up until $A_1$,
$L_1$. At time $t=1$, the natural value of treatment is equal to the
observed value of treatment, but this will not be the case for
subsequent timepoints. The hypothetical intervention, $\d_1$,
generates an intervened-on treatment at $t=1$, which we have denoted
$A_1^\d$, and counterfactual $L_2(A_1^\d)$. These are then part of the
counterfactual history and the natural value of treatment at time
$t=2$, denoted with $H_2(A_1^\d) = (L_1, A_1^\d, L_2(A_1^\d))$ and
$A_2(A_1^\d)$.

At time $t=2$, the post-intervention treatment is defined as
$A_2^\d =\d_2(A_2(A_1^\d), H_2(A_1^\d))$. In particular, note that
this intervention depends only on the \textit{contemporaneous} natural
value of treatment at time $t=2$, $A_2(A_1^d)$, and not on the natural
value of treatment at $t=1$, $A_1$. Instead, the intervention depends
on the intervened-on value of treatment at $t=1$, namely
$A_1^d=\d_1(A_1, L_1)$. This intervention at time $t=2$ generates an
intervened-on $A_2^d$ and updated counterfactual history, which are
then used to assign the post-intervention exposure at time $t=3$ as
$A_3^\d =\d_3(A_3(\bar A_2^\d), H_3(\bar A_2^\d))$. This process is
iterated forward in time until interventions at all times $t\in[\tau]$
have been performed, at which point we obtain the counterfactual
outcome
$Y(\bar A^\d)=f_Y(A_\tau^\d, H_\tau(\bar A_{\tau-1}^\d), U_Y)$. The
general definition of LMTPs that depend on the current natural value
of treatment is as follows:
\begin{definition}\label{def:mtpn}
  An intervention variable $A_t^\d$ is said to be a
  \textit{longitudinal modified treatment that depends on the
    contemporaneous natural value of treatment} if it is defined as
  $A^\d_t = \d_t(A_t(\bar{A}_{t-1}^\d), H_t(\bar A_{t-1}^\d))$ for a
  user-given function $\d_t$.
\end{definition}

%One can also allow the function $d_t$ to depend on a randomizer that
%is independent of $U$ and does not depend on $\P$, but we omit that
%case in this paper for simplicity.
The data generating process that leads to the above counterfactual
outcome is illustrated in Figure \ref{fig:swig-current} with $\tau=2$
and omitting $L_1$. \cite{hoffman2024studying} provide a large list of
interventions of practical interest that can be described using this
framework. Importantly, delay interventions do not fit
this definition. We elaborate on this point in \S\ref{sec:delay}.

\subsection{Longitudinal modified treatment policies that depend on
  the natural history of treatment}\label{sec:hmtp}

In this paper, we generalize the above definition to allow dependence
of the intervention at time $t$ on the \textit{history} of the natural value of
treatment---instead of just on the contemporaneous natural value of treatment. Specifically, counterfactual outcomes are generated as
follows. 

The intervention at $t=1$ is defined identical to \S\ref{sec:concurrent}. At time $t=1$, the post-intervention variable is defined as a
transformation $A_1^\d =\d_1(A_1, L_1)$ of the natural value (which, again, is also the observed value) of
treatment $A_1$ and $L_1$. This intervention generates a
counterfactual history and treatment at time $t=2$, denoted with
$H_2(A_1^\d) = (L_1, A_1^\d, L_2(A_1^\d))$ and $A_2(A_1^\d)$,
respectively.

At time $t=2$, the two definitions diverge. At this timepoint the post-intervention exposure is
generated as $A_2^\d =\d_2(\bar A_2(A_1^\d), H_2(A_1^\d))$. The difference between this and the definition of $A_2^\d$ in \S\ref{sec:concurrent} is the overbar for the history of the natural value $\bar A_2(A_1^\d)$, which is defined as
$\bar A_2(A_1^\d)=(A_1, A_2(A_1^\d))$. In particular, note that this intervention can depend on two values of treatment at $t=1$, the natural value of treatment, $A_1$, and the intervened value of treatment, 
$A_1^d=\d_1(A_1,L_1)$. This intervention at time $t=2$ generates a
counterfactual history and natural value of treatment at time $t=3$,
$H_3(\bar A_2^\d) = (L_1, A_1^\d, L_2(A_1^\d), A_2^\d, L_3(\bar A_2^\d))$ and
$A_3(\bar A_2^\d)$, similar to \S\ref{sec:concurrent}. 
These variables are then used to assign the post-intervention exposure at time $t=3$ as
$A_3^\d =\d_3(\bar A_3(\bar A_2^\d), H_3(\bar A_2^\d))$, where $\bar A_3(\bar A_2^\d)=(A_1, A_2(A_1^\d), A_3(\bar A_2^d))$. This process is
iterated forward in time until interventions at all times $t\in[\tau]$
have been performed, at which point we obtain the counterfactual
outcome
$Y(\bar A^\d)=f_Y(A_\tau^\d, H_\tau(\bar 
A_{\tau-1}^\d), U_Y)$. Below is the formal definition, which coincides with the general definition given in \S5.2 of \cite{richardson2013single}. Figure~\ref{fig:pooled} gives a graphical illustration of the differences between this definition and the definition of \S\ref{sec:concurrent}. %used in the construction of previous targeted learning estimators. 

\begin{definition}\label{def:mtpn}
  An intervention $A_t^\d$ is a \textit{generalized longitudinal modified treatment policy
  that depends on the history of the natural value of treatment}
  (G-LMTP) if it is defined as
  $A^\d_t = \d_t(\bar A_t(\bar A_{t-1}^\d), H_t(\bar A_{t-1}^\d))$ for
  a user-given function $\d_t$, where the natural history of treatment
  $\bar A_t(\bar A_{t-1}^\d)$ is defined as
  $(A_1,$ $A_2(A_1^\d), A_3(\bar A_2^\d),\ldots, A_t(\bar A_{t-1}^\d))$.
\end{definition}

\begin{figure}[H]
  \centering
  \begin{subfigure}[b]{0.9\textwidth}
    \centering
    \begin{tikzpicture}
      \tikzset{line width=1.5pt, outer sep=0pt,
        ell/.style={draw,fill=white, inner sep=2pt,
          line width=1.5pt},
        swig vsplit={gap=5pt,
          inner line width right=0.5pt}};
      \node[name=a1, shape=swig vsplit, swig vsplit={
        gap=15pt}]{
        \nodepart{left}{$A_1$}
        \nodepart{right}{$A_1^\d$} };
      \node[name=l2, right=5mm of a1, ell, shape=ellipse]{$L_2(A_1^\d)$};
      \node[name=a2,shape=swig vsplit,swig vsplit={
        gap=15pt}, right=5mm of l2]{
        \nodepart{left}{$A_2(A_1^\d)$}
        \nodepart{right}{$A_2^\d$} };
      \node[name=y, right=5mm of a2, ell, shape=ellipse]{$Y(\bar A^\d)$};
      \draw[->,line width=1.5pt,>=stealth]
      (a1.left center) edge (a1.right center)
      (a2.left center) to (a2.right center)
      (a1) edge[out=330,in=200] (a2.320)
      (a1) edge (l2)
      (l2) edge[out=30,in=150] (a2.20)
      (a1) edge[out=15,in=150] (a2)
      (a1) edge[out=30,in=150] (y)
      (l2) edge[out=330,in=210] (y)
      (l2) edge (a2)
      (a2) edge (y);
    \end{tikzpicture}
    \caption{LMTP that depends on the natural \textit{contemporaneous} value of 
      treatment}
    \label{fig:swig-current}
  \end{subfigure}

  \begin{subfigure}[b]{0.9\textwidth}
    \centering
    \begin{tikzpicture}
      \tikzset{line width=1.5pt, outer sep=0pt,
        ell/.style={draw,fill=white, inner sep=2pt,
          line width=1.5pt},
        swig vsplit={gap=5pt,
          inner line width right=0.5pt}};
      \node[name=a1, shape=swig vsplit, swig vsplit={
        gap=15pt}]{
        \nodepart{left}{$A_1$}
        \nodepart{right}{$A_1^\d$} };
      \node[name=l2, right=5mm of a1, ell, shape=ellipse]{$L_2(A_1^\d)$};
      \node[name=a2,shape=swig vsplit,swig vsplit={
        gap=15pt}, right=5mm of l2]{
        \nodepart{left}{$A_2(A_1^\d)$}
        \nodepart{right}{$A_2^\d$} };
      \node[name=y, right=5mm of a2, ell, shape=ellipse]{$Y(\d)$};
      \draw[->,line width=1.5pt,>=stealth]
      (a1.left center) edge (a1.right center)
      (a1.95) edge[out=30, in=160, dashed] (a2.40)
      (a2.left center) to (a2.right center)
      (a1) edge[out=330,in=200] (a2.320)
      (a1) edge (l2)
      (l2) edge[out=30,in=150] (a2.20)
      (a1) edge[out=15,in=150] (a2)
      (a1) edge[out=30,in=150] (y)
      (l2) edge[out=330,in=210] (y)
      (l2) edge (a2)
      (a2) edge (y);
    \end{tikzpicture}
    \caption{LMTP that depends on the natural \textit{history} of
      treatment. The dashed line is used to highlight the additional arrow used here, as compared to the figure above.}
    \label{fig:swig-history}
  \end{subfigure}
  \caption{Single world intervention graphs (SWIGs) for a modified
    treatment policy with two timepoints. The dashed arrow
    corresponds to the generalization addressed in this work.}
    \label{fig:swig}
\end{figure}

\subsection{Interventions that depend on the natural history of
  treatment}\label{sec:delay}

In this section we discuss two scientifically and clinically relevant
interventions that require the G-LMTP framework, highlighting issues that arise
when more narrowly defined %standard 
LMTPs do not have access to the history of
natural treatment values once a modification has occurred. Specificially, the G-LMTP framework is required for %Thislimitation prevents the  standard MTPs from representing 
policies that
depend on aspects of the natural treatment trajectory that may be
overwritten by the intervention.

\subsubsection{Delay interventions}
%It is often of interest to evaluate the effect of a delay in the start of an exposure, intervention, or policy. 
In vaccine studies, one may be interested in assessing the effect of a
delay in receiving a vaccine booster. Likewise, one may be interested
in estimating the effect of delaying a potentially harmful
intervention such as invasive mechanical ventilation in acutely ill
patients. Here we briefly show that such delay policies cannot be
expressed within the LMTP framework described by
\cite{diaz2023nonparametric}, but can be expressed by the G-LMTP
framework described in this
paper. %Notably, the definition and identification (but not the
%efficient estimation) of effects that depend on the history of the
%natural value of treatment had already been discussed by
%\cite{richardson2013single}, albeit using a different causal
%model. 
Consider a simple setting with only two time points and a binary
intervention at each timepoint. A delay in exposure of one timepoint
would correspond to an intervention that assigns treatment at time
$t=1$ as $A_1^\d = \d_1(A_1, L_1) = 0$. At time $t=2$, the desired
delay intervention should assign the treatment as $A_2^\d =
A_1$. %This, however, is not allowed
%by the SWIG in 
This however, is not possible in the setting depicted by Figure
\ref{fig:swig-current}, as there is no dependence of the MTP on the
prior natural value of treatment %is not allowed
under that setup. In other words,
$A_2^\d = \d_2(A_2(A_1^\d), L_2(A_1^\d), A_1^\d, L_1)$ must be defined
solely in terms of the natural value of treatment at time 2 and the
\textit{intervened} history, i.e., $d_1(A_1, L_1)$. If $d_1$ modifies
$A_1$ (e.g., from 1 to 0), the function $d_2$ no longer has access to
the original $A_1$, and thus cannot shift it forward in time. In
contrast, under an G-LMTP, $A_2^\d$ may be defined as a function of
$A_1$ even if $A_1^\d \ne A_1$, and the policy can be explicitly
constructed to reassign the original treatment at time 1 to occur at
time 2 instead.

To further elaborate, consider a long-standing question in critical
care regarding the effects of delaying invasive mechanical ventilation. Let $A_t \in \{0,1,2\}$ denote the level of oxygen support at
time $t$, where $A_t = 0$ indicates no support, $A_t = 1$ indicates
non-invasive support, and $A_t = 2$ indicates invasive mechanical
ventilation. For an arbitrary vector $\bar a_t$, define the  function%---which we will redefine using G-LMTP notation shortly---as
\[
  d_t(a_t, \bar a_{t-1}) =
  \begin{cases}
    1 & \text{if } a_t = 2 \text{ and } a_s \leq 1 \text{ for all } s < t, \\
    a_t & \text{otherwise,}
  \end{cases}
\]
%An G-LMTP defined by
%$A_t^\d = d_t(A_t(\bar A_{t-1}^\d), \bar A_{t-1}(\bar A_{t-2}^\d))$
 which represents a policy that delays mechanical
ventilation by substituting it with non-invasive support unless the
patient has already received invasive mechanical ventilation (IMV), where the G-LMTP is defined by
$A_t^\d = d_t(A_t(\bar A_{t-1}^\d), \bar A_{t-1}(\bar A_{t-2}^\d))$. Consider $\tau=3$ and a patient with observed treatment
history $(A_1, A_2, A_3) = (1, 2, 2)$. At time $t = 1$, the patient
would receive $A_1^\d = 1$. At time $t = 2$, since the natural
treatment is $A_2(1) = 2$ and the patient has not yet received
invasive ventilation, the policy prescribes $A_2^\d = 1$. At time
$t = 3$, the value of $A_3^\d$ depends on the counterfactual oxygen
support requirement under the delayed history: if $A_3(1) = 2$, then
the policy assigns $A_3^\d = 1$; otherwise, $A_3^\d = A_3(1)$. That
is, the possible paths under this intervention are $(1, 1, 2)$,
$(1, 1, 1)$, and $(1, 1, 0)$, depending on the natural value of
$A_3(1)$.

%One might posit a monotonicity condition that if $a_2 \geq a_2'$, then
%$A_3(a_2') \geq A_3(a_2)$---i.e., if a patient receives more
%aggressive oxygen support at time $t = 2$, then their supplemental
%oxygen requirement at time $t = 3$ should remain the same or
%decrease. Under this assumption, it follows that
%$A_3(a) \geq A_3(2) = A_3 = 2$ for $a\in\{0,1\}$, and so
%$A_3(1) = A_3(0) = 2$. In that case, the only consistent regime under
%the intervention is $(A_1^\d, A_2^\d, A_3^\d) = (1, 1, 2)$. However,
%this condition is not necessary to interpret effects of the G-LMTP
%defined by
%$A_t^\d = d_t(A_t(\bar A_{t-1}^\d), \bar A_{t-1}(\bar A_{t-2}^\d))$ as
%an \textit{intent to delay} IMV that might instead \textit{prevent} it altogether
%if the above condition is false.
%\end{comment}
In contrast, if the intervened-on treatment is assigned as an
LMTP that depends on the current natural value of treatment, i.e.,
$A_t^\d = d_t(A_t(\bar A_{t-1}^\d), \bar A_{t-1}^\d)$, corresponding to the estimators developed in \cite{diaz2023nonparametric}, the
intervention would effectively \textit{prevent} intubation for all
patients rather than intend to delay it. Continuing with the example above, a
patient with observed treatment history $(A_1, A_2, A_3) = (1, 2, 2)$
would have $A_1^\d = 1$ as before. At time $t = 2$, the LMTP rule
applies $\d_2$ to $A_2(1)=2$ and $A_1^\d = 1$, yielding $A_2^\d =
1$. However, at time $t = 3$, the intervention uses
$A_3(\bar A_2^\d) = A_3(1,1)$ and $(A_2^\d, A_1^\d) = (1, 1)$. The
only consistent regimens under the intervention are therefore
$\bar A^\d = (1, 1, 1)$ and $\bar A^\d = (1, 1, 0)$, depending on the
value of $A_3(1,1)$. In other words, the LMTP cannot detect that
invasive ventilation had been indicated and subsequently overridden,
and so the rule continues to suppress it indefinitely.

\subsubsection{Dose escalation interventions}

Another example of a class of interventions that requires access to the history of
natural treatment values arises when the treatment is a dose and the
scientific question concerns modification of the clinician's natural
dose-escalation behavior. For example, suppose $A_t$ denotes the dose
of a medication at visit $t$, and consider a policy that leaves the
clinician's natural dose unchanged unless the clinician would naturally
increase the dose by more than $\delta$ since the previous visit. If
the natural increase exceeds $\delta$, the policy caps the increase at
$\delta$.

Assume that $A_t$ is ordered and let $\delta>0$ denote the maximum
allowed increase. For arbitrary treatment values $a_t$ and $a_{t-1}$,
define, for $t\geq 2$, $d_t(a_t, a_{t-1}) = a_{t-1}+\delta$ if
$a_t-a_{t-1}>\delta$, and $d_t(a_t, a_{t-1}) = a_t$ otherwise. The
rule at $t=1$ may be defined as $d_1(a_1,a_0)=a_1$. The corresponding
G-LMTP is
$A_t^\d = d_t(A_t(\bar A_{t-1}^\d), A_{t-1}(\bar A_{t-2}^\d))$.  This
policy compares the natural dose at time $t$ with the natural dose at
time $t-1$, both evaluated under the counterfactual history induced by
the previous interventions. It therefore modifies only large natural
escalations, while leaving smaller natural increases, dose decreases,
and dose stability unchanged.

This intervention cannot generally be represented as an LMTP that
depends only on the contemporaneous natural value of treatment. Under
such an LMTP, the treatment at time $t$ may be written as
$
  A_t^\d =
  d_t(A_t(\bar A_{t-1}^\d), A_{t-1}^\d).$ The intervened previous
dose $A_{t-1}^\d$, does not generally correspond to the previous natural dose
$A_{t-1}(\bar A_{t-2}^\d)$. Once a large escalation has been capped at time $t-1$,
the contemporaneous LMTP no longer has access to the dose the clinician
would naturally have assigned at that time. As a result, it cannot
determine whether the clinician's next natural dose represents a large
increase relative to the previous natural dose.

To illustrate, take $\delta=10$ and suppose $A_1=10$. Since the
policy does not modify the dose at time $1$, both the G-LMTP and the
corresponding current-time LMTP assign $A_1^\d=10$. Suppose that, under
this intervened history, the clinician would naturally assign
$A_2(A_1^\d)=40$. Both policies would then cap the increase at $10$, yielding $A_2^\d=20$.

Now consider the next timepoint. Suppose that, under the common
intervened history through time $2$, the clinician would naturally
assign $
  A_3(A_1^\d,A_2^\d)=45.$ 
At this point the two policies differ. The G-LMTP compares this current
natural dose with the previous natural dose,
$A_2(A_1^\d)=40$, and therefore leaves the dose unchanged because
$45-40\leq \delta$. Thus the G-LMTP assigns $A_3^\d=45$. In contrast,
a current-time LMTP that only has access to the intervened history
would compare the current natural dose with the previously administered
intervened dose $A_2^\d=20$, and would assign $A_3^\d=30$. This is a
different intervention: it caps increases relative to the previously
administered intervened dose, rather than capping the clinician's
natural dose-escalation decision.

\section{Identification of causal effects}\label{sec:iden}

For fixed values $\bar a_t$ and $\bar l_t$, we recursively define
$a_t^\d=\d_t(\bar a_t, h^\d_t)$, where
$h^\d_t=(\bar a_{t-1}^\d, \bar l_t)$. G-LMTPs will be identified under
the following assumptions. Alternative assumptions under a different causal model were given by \cite{richardson2013single} and \cite{young2014identification}.
\begin{assumption}[Positivity]\label{ass:support}
  If
  $(a_t,l_t)\in \supp\{A_t,L_t\mid A_{t-1}=a^d_{t-1},
  H_{t-1}=h^d_{t-1}\}$ then $(a_t^d, h^d_t)\in \supp\{A_t,H_t\}$
  for $t\in\{1,\ldots,\tau\}$.
\end{assumption}
\begin{assumption}[Strong sequential randomization]\label{ass:exch}
  $U_{A,t}\indep (\underline U_{L, {t+1}}, \underline U_{A, t+1}) \mid H_t$ for all  $t\in\{1,\ldots,\tau\}$.
\end{assumption}
We have the following identification theorem, which allows us to
compute the parameters $\thetalmtp$ as a function of the observed data
distribution.
% Lemma 1 in the
% Appendix to \cite{kennedy2018nonparametric} \citep[see also the
% g-formula of][]{Robins86}.

\begin{theorem}[Identification of the effect of H-LMTPs]\label{theo:iden}
  \item Under Assumptions \ref{ass:support} and \ref{ass:exch},
    $\thetalmtp$ is identified as
    \[\theta =\int_{\mathcal {\bar A}_\tau, \mathcal {\bar L}_\tau} \E[Y\mid A_\tau=a_\tau^\d,
    H_\tau=h_\tau^\d] \prod_{k=1}^\tau\dd\P(a_k,l_k\mid
    A_{k-1}=a_{k-1}^\d, H_{k-1}=h_{k-1}^\d)\]
\end{theorem}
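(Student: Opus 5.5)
The approach is to write $\thetalmtp=\E[Y(\bar A^\d)]$ and peel off one time point at a time, from the end backwards. At each step, sequential randomization (Assumption~\ref{ass:exch}) lets us replace counterfactual conditional laws by observed ones, and positivity (Assumption~\ref{ass:support}) guarantees those conditional laws are well defined at the intervened histories.

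The key device is to carry along the \emph{natural} treatment history together with the intervened history. Let $\bar A_t(\bar A_{t-1}^\d)$ and $\bar L_t(\bar A_{t-1}^\d)$ denote the counterfactual natural treatments and covariates. Then $A_t^\d$ is a deterministic function of $(\bar A_t(\bar A_{t-1}^\d),\bar L_t(\bar A_{t-1}^\d))$ through the recursion $a_t^\d=\d_t(\bar a_t,h_t^\d)$. Consequently,
\[
Y(\bar A^\d)=\sum_{\bar a_\tau,\bar l_\tau}\one\{\bar A_\tau(\bar A^\d_{\tau-1})=\bar a_\tau,\ \bar L_\tau(\bar A^\d_{\tau-1})=\bar l_\tau\}\,Y(\bar a^\d_\tau),
\]
where $\bar a^\d_\tau$ is now a fixed vector. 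For each fixed $(\bar a_\tau,\bar l_\tau)$ this reduces the problem to a static regime $\bar a^\d$ together with the event that the natural values equal $(\bar a,\bar l)$. Using an integral in place of the sum for continuous $L$, I then need to show
\[
\P\bigl(\bar A_\tau(\bar a^\d_{\tau-1})=\bar a_\tau,\bar L_\tau(\bar a^\d_{\tau-1})=\bar l_\tau\bigr)\,\E\bigl[Y(\bar a^\d_\tau)\mid \cdot\bigr]
=\E[Y\mid A_\tau=a_\tau^\d,H_\tau=h_\tau^\d]\prod_k \dd\P(a_k,l_k\mid A_{k-1}=a_{k-1}^\d,H_{k-1}=h_{k-1}^\d).
\]

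I would prove this by induction on $t$, with the following claim. For every $t$, the joint law of $\bigl(A_t(\bar a^\d_{t-1}),L_t(\bar a^\d_{t-1}),\text{future counterfactuals under }\bar a^\d\bigr)$, conditional on $\bar A_{t-1}(\bar a^\d_{t-2})=\bar a_{t-1}$ and $\bar L_{t-1}(\bar a^\d_{t-2})=\bar l_{t-1}$, equals the corresponding law conditional on the observed event $\{A_{t-1}=a^\d_{t-1},H_{t-1}=h^\d_{t-1}\}$. The argument runs through the structural equations. The natural value $A_t(\bar a^\d_{t-1})=f_{A_t}(H_t(\bar a^\d_{t-1}),U_{A,t})$, and future counterfactuals are functions of $(\underline U_{L,t+1},\underline U_{A,t+1})$ and the fixed regime. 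Assumption~\ref{ass:exch} makes $U_{A,t}$ independent of these given $H_t$. This lets me swap conditioning on the natural value $A_t=a_t$ for conditioning on the observed treatment being $a_t^\d$, since the future depends on $A_t$ only through the assigned value $a^\d_t$. On the event $H_t(\bar a^\d_{t-1})=h_t^\d$ that conditioning is consistent with the observed data. Iterating from $t=1$, where $A_1,L_1$ are observed, gives the product of the factors $\dd\P(a_k,l_k\mid A_{k-1}=a^\d_{k-1},H_{k-1}=h^\d_{k-1})$. The last step replaces $\E[Y(\bar a^\d)\mid\cdot]$ by $\E[Y\mid A_\tau=a^\d_\tau,H_\tau=h^\d_\tau]$, using $U_Y$ together with the same independence.

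I expect the main obstacle to be this exchangeability swap. The policy conditions on natural values $A_t(\bar a^\d_{t-1})=a_t$, yet the subsequent counterfactuals are indexed by the different value $a_t^\d$. Assumption~\ref{ass:exch} is a single-world statement about $U$'s, so I must argue carefully that $\{A_t(\bar a^\d_{t-1})=a_t\}$ is a $\sigma(H_t(\bar a^\d_{t-1}),U_{A,t})$-event. Because $U_{A,t}$ is independent of all downstream exogenous noise given $H_t$, conditioning on it does not change the law of $(L_{t+1}(\bar a^\d_t),\dots)$. This law in turn equals the observed conditional law given $A_t=a^\d_t,H_t=h^\d_t$ by consistency. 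Positivity ensures that every conditioning event appearing here has positive support, so all expressions are well defined.
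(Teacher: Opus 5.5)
Your proposal is correct, but it is organized differently from the paper's proof.

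\textbf{The paper's route.} The paper does not unroll the policy into static regimes up front. It defines the partially intervened counterfactual $Z_t(\bar a_t)$: the first $t$ treatments are fixed at $\bar a^\d_t$, with natural values $\bar a_t$, and the policy runs thereafter. It then iterates forward in three moves at each step:
\begin{enumerate}[(i)]
\item integrate over $(A_{t+1},L_{t+1})$ given $A_t=a^\d_t,H_t=h^\d_t$;
\item use consistency to replace $Z_t(\bar a_t)$ by $Z_{t+1}(\bar a_{t+1})$ on the event $\{A_{t+1}=a_{t+1},H_{t+1}=h^\d_{t+1}\}$;
\item use the conditional independence of $Z_{t+1}(\bar a_{t+1})$ and $A_{t+1}$ given $H_{t+1}$ to move the conditioning from $A_{t+1}=a_{t+1}$ to $A_{t+1}=a^\d_{t+1}$.
\end{enumerate}
Move (iii) is exactly the swap you identify as the main obstacle. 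It rests on the same observation you make: given $H_{t+1}$, $A_{t+1}$ is a function of $U_{A,t+1}$, while everything downstream is a function of later noise.

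\textbf{What each route buys.} The paper's route is lighter. It only tracks conditional means of one nested counterfactual, and it handles continuous $L_t$ directly by iterated integration. Yours proves more: the joint law of the natural trajectory and $Y(\bar a^\d)$ under each fixed regime, an extended g-formula in the spirit of Richardson and Robins. That makes transparent why dependence on overwritten natural values costs nothing, and why the strong sequential randomization assumption must cover $\underline U_{A,t+1}$.

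\textbf{What to tighten.} First, check recursively that $\{\bar A_\tau(\bar A^\d_{\tau-1})=\bar a_\tau,\bar L_\tau(\bar A^\d_{\tau-1})=\bar l_\tau\}$ coincides with the corresponding event under the static regime $\bar a^\d_{\tau-1}$; your display silently switches between the two. Second, for continuous $L_t$, replace the sum over trajectories by the finite sum $Y(\bar A^\d)=\sum_{\bar b\in\bar{\mathcal A}_\tau}\one\{\mathcal E_{\bar b}\}\,Y(\bar b)$. Here $\mathcal E_{\bar b}$ is the event that the recursion $a^\d_t=\d_t(\bar a_t,h^\d_t)$, applied to the natural path under the static regime $\bar b$, returns $\bar b$. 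These events partition the sample space, since only the realized policy path qualifies, and this avoids conditioning on null events. Finally, your induction runs forward from $t=1$, not backwards as your first sentence says.
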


The proof of this theorem is identical to the first part of the proof
of Theorem~1 in \cite{diaz2023nonparametric} and is given in the Supplementary Materials. Theorem~1
in \cite{diaz2023nonparametric} additionally shows that current-time
LMTPs admit a sequential regression representation of the
identification formula, which greatly facilitates their estimation
because it allows for the straightforward generalization of estimators
developed for static interventions \citep{luedtke2017sequential,
  rotnitzky2017multiply}. In contrast, constructing a sequential
regression representation for G-LMTPs requires more care. Consider a
simple example with only two timepoints, no baseline covariates, and
an intervention that only depends on the history of the natural value
of treatment, i.e., $A_1^d=d_1(A_1)$ and
$A_2^d = d_2(A_1, A_2(A_1^\d))$. The
outcome expectation under this intervention is  identified as
\[\int E[Y\mid A_2=d_2(a_2, a_1), L_2=l_2, A_1=d_1(a_1)]\dd\P(a_2,
  l_2\mid A_1=d_1(a_1))\dd\P(a_1).\] The integral with respect to
$(a_2,l_2)$ cannot be written as a conditional expectation because the
outcome regression involves setting $A_1$ to two different values
$a_1$ and $d_1(a_1)$. However, progress can be made by decoupling the value of $a_1$ used in the conditioning set $\P(a_2,
  l_2\mid A_1=d_1(a_1))$ from that used in $d_2(a_2, a_1)$ (see \S\ref{sec:estima} and Figure \ref{fig:pooled} for more discussion). Specifically, define
$ m_2(a_2, h_2)=E[Y\mid A_2=a_2, L_2=l_2, A_1=a_1]$. Then, for each
$s_1$ in the support of $A_1$, define
$m_1(s_1,a_1)=E[m_2(d_2(s_1, A_2), H_2)\mid A_1=a_1]$, and note that
$\theta=E[m_1(A_1, d_1(A_1))]$. This can be generalized as in the
following proposition.

\begin{proposition}[Sequential regression representation of the
  identification formula]\label{propo:sr}
  Let $m_{\tau}(a_\tau, h_\tau)=E[Y\mid A_\tau=a_\tau, H_\tau=h_\tau]$, and $q_\tau(\bar s_{\tau-1}, A_{\tau}, H_{\tau}) = m_\tau(d_\tau((\bar s_{\tau-1}, A_\tau), H_\tau), H_\tau)$. For each $\bar s_t\in {\bar {\mathcal A}}_t$, recursively for $t=\tau-1,\ldots, 1$ define
  \begin{align*}
    m_t(\bar s_t,a_t, h_t)&=E[q_{t+1}(\bar s_t, A_{t+1}, H_{t+1})\mid A_t=a_t, H_t = h_t]\\
        q_t(\bar s_{t-1}, a_t, h_t) &= m_t((\bar s_{t-1}, a_t), d_t((\bar s_{t-1},
  a_t), h_t),
    h_t),
  \end{align*}    
  where we define $s_0=\emptyset$. Then we have $\theta = E[q_1(A_1, H_1)]$.
\end{proposition}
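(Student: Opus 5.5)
We prove the statement by backward induction on $t$. The claim is that each $m_t$ equals the tail of the g-computation integral in Theorem~\ref{theo:iden}, with the natural-treatment history frozen at $\bar s_t$. Here $a_t,h_t$ denote the \emph{intervened} values at which we condition.

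For fixed $\bar s_t$, $a_t$, $h_t$, define the partial g-functional
\[
\Psi_t(\bar s_t,a_t,h_t)=\int \E[Y\mid A_\tau=a_\tau^\star,H_\tau=h_\tau^\star]\prod_{k=t+1}^\tau \dd\P(a_k,l_k\mid A_{k-1}=a_{k-1}^\star,H_{k-1}=h_{k-1}^\star).
\]
The starred intervened values are defined recursively with the first $t$ natural treatments fixed. Set $a_t^\star=a_t$ and $h_t^\star=h_t$. For $k>t$, set $h_k^\star=(a_{k-1}^\star,h_{k-1}^\star,l_k)$ and $a_k^\star=\d_k((\bar s_t,a_{t+1},\ldots,a_k),h_k^\star)$. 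Two facts about this object drive the proof.

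First, Theorem~\ref{theo:iden} gives $\theta=\int \Psi_1(a_1,\d_1(a_1,l_1),l_1)\,\dd\P(a_1,l_1)$. To see this, note that in the theorem's integrand the natural history $\bar a_k$ is exactly the integration variables, and the intervened values $a_k^\d$ coincide with the starred ones when $\bar s_1=a_1$ and $a_1^\star=a_1^\d$.

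Second, $\Psi_t$ satisfies the one-step recursion
\[
\Psi_t(\bar s_t,a_t,h_t)=\int \Psi_{t+1}\big((\bar s_t,a_{t+1}),\,\d_{t+1}((\bar s_t,a_{t+1}),h_{t+1}),\,h_{t+1}\big)\,\dd\P(a_{t+1},l_{t+1}\mid A_t=a_t,H_t=h_t),
\]
where $h_{t+1}=(a_t,h_t,l_{t+1})$. This follows by peeling off the $k=t+1$ factor of the product (Fubini). The starred recursion from step $t+1$ onward depends only on the extended natural history $(\bar s_t,a_{t+1})$ and the new intervened pair $(a_{t+1}^\star,h_{t+1}^\star)$.

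The induction then runs as follows. At $t=\tau$ we have $\Psi_\tau(\bar s_\tau,a_\tau,h_\tau)=m_\tau(a_\tau,h_\tau)$, so $\Psi_\tau$ does not depend on $\bar s_\tau$. Hence $q_\tau(\bar s_{\tau-1},a_\tau,h_\tau)=\Psi_\tau((\bar s_{\tau-1},a_\tau),\d_\tau(\ldots),h_\tau)$. Now assume $q_{t+1}(\bar s_t,a_{t+1},h_{t+1})=\Psi_{t+1}((\bar s_t,a_{t+1}),\d_{t+1}((\bar s_t,a_{t+1}),h_{t+1}),h_{t+1})$. Writing the conditional expectation in the definition of $m_t$ as an integral against $\dd\P(a_{t+1},l_{t+1}\mid A_t=a_t,H_t=h_t)$ and applying the recursion gives $m_t(\bar s_t,a_t,h_t)=\Psi_t(\bar s_t,a_t,h_t)$. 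The definition of $q_t$ then propagates the hypothesis down one step. At $t=1$, with $s_0=\emptyset$, we get $q_1(a_1,h_1)=\Psi_1(a_1,\d_1(a_1,h_1),h_1)$. Taking the expectation over $(A_1,H_1)$ and using the first fact yields $\theta=\E[q_1(A_1,H_1)]$.

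The main obstacle is bookkeeping rather than analysis. We must verify that the product of conditional laws in Theorem~\ref{theo:iden} factorizes consistently with the decoupled arguments: the natural treatments enter only through the $\d_k$ arguments, while the conditioning sets use only intervened values. We must also ensure every conditional expectation is evaluated where it is well defined. For the latter we invoke Assumption~\ref{ass:support}, which guarantees that each point $(a_k^\d,h_k^\d)$ reached with positive probability lies in the support of $(A_k,H_k)$, so that $m_k$ and the conditional laws are well defined there, up to versions. We will state this explicitly at the induction step.
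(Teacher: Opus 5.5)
Your proof is correct and takes essentially the paper's route. The paper gives no standalone proof; it justifies the proposition through the $\tau=3$ illustration and through the step in the supplementary efficient-influence-function derivation that collapses the sum over $(a_\tau,l_\tau)$ into $m_{\tau-1}(\bar a_{\tau-1},a_{\tau-1}^\d,h_{\tau-1}^\d)$ and iterates down to $m_s$, with natural-treatment arguments decoupled from the conditioning values; your partial functional $\Psi_t$ and the backward induction $m_t=\Psi_t$ make that iteration explicit for general $\tau$.
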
  
To illustrate this proposition, consider an example with $\tau=3$ timepoints, no baseline covariates, and an intervention that only
depends on the history of the natural value of treatment but not on covariates, i.e.,
$A_1^d=d_1(A_1)$, $A_2^d = d_2(A_1, A_2(A_1^d))$, and  $A_3^d = d_3(A_1, A_2(A_1^d), A_3(\bar A_2^d))$. According to Theorem~\ref{theo:iden}, the outcome expectation under this intervention
would be identified as
\[\int E[Y\mid d_3(\bar a_3), l_3,d_2(\bar a_2), l_2, d_1(a_1)]\dd\P(a_3,
  l_3\mid d_2(\bar a_2), l_2, d_1(a_1))\dd\P(a_2,  l_2\mid d_1(a_1))\dd\P(a_1).\]
  The sequential regression algorithm in Proposition~\ref{propo:sr} would proceed as follows. The recursion is initialized as $m_3(a_3, h_3) = E[Y\mid A_3=a_3, H_3=h_3]$ and $q_3(\bar s_2, a_3, h_3) = m_3(d(\bar s_2, a_3), h_3)$, which is equal to $E[Y\mid d_3(\bar s_2, a_3), l_3, a_2, l_2, a_1]$. Then, we start the iteration by computing $m_2(\bar s_2, a_2, h_2) = E[q_3(\bar s_2, A_3, H_3)\mid A_2=a_2, H_2=h_2]$, which according to our definitions is equal to 
  \[\int E[Y\mid d_3(\bar s_2, a_3), l_3,a_2, l_2, a_1]\dd\P(a_3,
  l_3\mid a_2, l_2, a_1).\]
  The iteration then proceeds by computing 
\begin{align*}
    q_2(s_1, a_2, h_2) &= m_2((s_1, a_2), d_2(s_1, a_2), h_2)\\
    &=\int E[Y\mid d_3(s_1, a_2, a_3), l_3,d_2(s_1, a_2), l_2, a_1]\dd\P(a_3,
  l_3\mid d_2(s_1, a_2), l_2, a_1)
\end{align*}
and the regression $m_1(s_1, a_1)=E[q_2(s_1, A_2, H_2)\mid A_1=a_1]$ which is equal to 
\[\int E[Y\mid d_3(s_1, a_2, a_3), l_3,d_2(s_1, a_2), l_2, a_1]\dd\P(a_3,
  l_3\mid d_2(s_1, a_2), l_2, a_1)\dd\P(a_2, l_2\mid a_1).\] The
prediction $q_1(A_1) = m_1(A_1, d(A_1))$ is averaged across values of
$A_1$ to obtain the identification formula. In other words, we augment
the data using $s_t$ to track the possible values that $A_t$ can take
in $d_{t+1}$, while reserving the original notation $a_t$ to compute
integrals with respect to $P(a_{t+1}, l_{t+1}\mid \ldots)$ (see Figure
\ref{fig:pooled}).
%%%%%%%%%%%%%%%%%%%%%%%%%%%%%%%%%%%%%%%%%%%%%%%%%%%%%%%%%%%%%%%%%%%%%%%%%%%%%%% 
\section{Optimality theory}

Efficiency theory in this paper focuses on functions $\d$ that do not
depend on $\P$ (recall that the function is deterministic but allowed
to take a randomizer as argument). For each $t< k$, and each $s_t$
define the random variable
$D_{t,k}(\bar s_k)=\one\{A_{t+1}=d_{t+1}(\bar s_{t+1}, H_{t+1}),
\ldots A_k=d_k(\bar s_k, H_k)\}$. For $t=\tau-1,\ldots,0$, define
$\bar\varphi_{t+1}(\bar s_t, Z) = \varphi_{t+1}(\bar s_t, Z) -
m_t(\bar s_t, A_t, H_t)$, where $m_0=\theta$ and
% \begin{multline}
%   \label{eq:eifgamma}
%  \D_1 : Z\mapsto
%   \sum_{s=1}^\tau\sum_{\bar a_s\in D_{1,s}}\left(\prod_{k=1}^s \frac{g_s(a_s\mid H_s)}{g_s(A_s\mid
%       H_s)}\right)\times \\\{m_{s+1}(A_{s+1}, d(\bar a_s, A_{s+1}, H_{s+1}), H_{s+1}) -
%   m_s(a_s, A_s, H_s)\}\\ + m_1(A_1, d(A_1, H_1), H_1) - \theta
% \end{multline}
% \begin{multline}
%   \label{eq:eifgamma}
%   \D_t : Z\mapsto
%   \sum_{k=t}^\tau\sum_{\bar a_k\in D_{t,k}}\left(\prod_{u=t}^k \frac{g_u(a_u\mid H_u)}{g_u(A_u\mid
%       H_u)}\right)\times \\\{q_{k+1}(a_k, A_{k+1}, H_{k+1}) -
%   m_s(a_k, A_k, H_k)\}\\ + q_t(A_{t-1}, A_t, H_t)
% \end{multline}
{\small\begin{equation}
    \label{eq:eifgamma}
    \begin{aligned}
  \varphi_{t+1}(\bar s_t, Z)&=
           \sum_{k=t+1}^\tau\sum_{\bar s_{(t+1):k}}D_{t,k}(\bar s_k)\left(\prod_{u=t+1}^k
           \frac{g_u(s_u\mid H_u)}{g_u(A_u\mid
      H_u)}\right)\{q_{k+1}(\bar s_k, A_{k+1}, H_{k+1}) -
           m_k(\bar s_k, A_k, H_k)\}\\
      &+ q_{t+1}(\bar s_t, A_{t+1}, H_{t+1}).
      \end{aligned}
    \end{equation}}%
  When necessary, we use the notation
  $\D_t(\cdot;\eta)$ or $\D_t(\cdot;\underline\eta_t)$ to highlight the
  dependence of $\D_t$ (or $\bar\D_t$) on
  $\underline\eta_t=(\r_t,\Q_t,\ldots,\r_\tau, \Q_\tau)$. We also use $\eta$ to denote
  $(\r_1,\Q_1,\ldots,\r_\tau, \Q_\tau)$, and define
  $\D_{\tau+1}(s_\tau, Z;\eta) = Y$. 
% [Similar to
% https://www.degruyter.com/downloadpdf/j/ijb.2012.8.issue-1/1557-4679.1370/1557-4679.1370.pdf and https://arxiv.org/pdf/1705.02459.pdf]
  In the following, we let
  $\eta'=(\r_1',\Q_1',\ldots,\r_\tau', \Q_\tau')$ denote some value of
  $\eta$. This value will typically represent the probability limit of a
  given estimator $\hat\eta$. The function $\D_t$ satisfies
  the following property, which will be crucial to establish consistency
  of some estimators of $\theta$ under multiple robustness assumptions,
  and to construct estimators of $\theta$ under slow convergence rates
  for estimation of $g_t$ and $m_t$.

\begin{theorem}[Conditional von Mises expansion]\label{theo:sdr}
    Define 
    $C_{t,k}' = D_{t,k}(\bar s_k)\prod_{r=t+1}^{k-1}\frac{g_r'(s_r,H_r)}{g_r'(A_r,H_r)}$.
    For each $t\in\{0,\ldots,\tau-1\}$, and for any $\eta'$, define the second order error
    term 
    {\footnotesize
      \begin{multline}
        \rem_t(\bar s_t, a_t, h_t;\eta')=\\
        \sum_{\substack{k>t\\\bar s_{(t+1):k}}}\E\left[C_{t,k}'\left\{\frac{g_k'(s_k\mid
              H_k)}{g_k'(A_k\mid H_k)}-\frac{g_k(s_k\mid
              H_k)}{g_k(A_k\mid H_k)}\right\}\{\Q_k'(\bar s_k, A_k, H_k) -
          \Q_k(\bar s_k, A_k, H_k)\}\,\,\bigg|\,\, A_t=a_t,H_t=h_t \right],\label{eq:rem}
      \end{multline}
    }%
    where for $t=0$ the conditioning event is the null set, and for
    $t=\tau$ we let $\rem_\tau(\bar s_\tau, a_\tau, h_\tau;\eta')=0$. The
    error of $ \Q_t'$ as an estimator of $\Q_t$ is characterized by
    the following first-order expansion:
    \begin{equation}
      \Q_t'(\bar s_t, a_t,h_t) - \Q_t(\bar s_t, a_t,h_t) =
      \E\big[\bar\D_{t+1}(\bar s_t, Z;\eta')\mid A_t=a_t,H_t=h_t\big] + \rem_t(\bar s_t, a_t, h_t;\eta').\label{eq:fo}
    \end{equation}
  \end{theorem}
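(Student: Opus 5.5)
The plan is backward induction on $t$, from $t=\tau-1$ down to $t=0$, using a one-step telescoping identity at each stage. Throughout, $\bar\D_{t+1}$ is evaluated at $\eta'$: all $q'$, $m'$ and $g'$ inside $\varphi_{t+1}$ are primed. The basic identity compares $\varphi_{t+1}(\bar s_t,Z;\eta')$ with the structure of the recursion in Proposition~\ref{propo:sr}. By the definition in (\ref{eq:eifgamma}), $\varphi_{t+1}$ is the sum of two pieces. The first is $q_{t+1}'(\bar s_t,A_{t+1},H_{t+1})$. The second is a weighted sum over $k\ge t+1$ and over $\bar s_{(t+1):k}$ of the residuals $q'_{k+1}-m'_k$. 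The first step is to rewrite this as a one-step recursion:
\[
\varphi_{t+1}(\bar s_t,Z;\eta')=\sum_{s_{t+1}} D_{t,t+1}(\bar s_{t+1})\frac{g'_{t+1}(s_{t+1}\mid H_{t+1})}{g'_{t+1}(A_{t+1}\mid H_{t+1})}\,\bar\D_{t+2}(\bar s_{t+1},Z;\eta') + q'_{t+1}(\bar s_t,A_{t+1},H_{t+1}).
\]
This follows by factoring out the $u=t+1$ term of the product and using $D_{t,k}=D_{t,t+1}D_{t+1,k}$. The last step also uses $\varphi_{\tau+1}=Y$, so the terminal residual is $Y-m_\tau'$.

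The second step is to condition on $(A_t,H_t)$ and subtract $m_t$. Write $m_t = E[q_{t+1}(\bar s_t,A_{t+1},H_{t+1})\mid A_t,H_t]$. Then
\[
m_t'-m_t = E\big[m_t'-q_{t+1}'+q_{t+1}'-q_{t+1}\mid A_t,H_t\big] = E[\bar\D_{t+1}\mid\cdot\,] - E[\text{weighted sum}\mid \cdot\,] + E[q'_{t+1}-q_{t+1}\mid\cdot\,].
\]
Note that $q'_{t+1}-q_{t+1}$ equals $m'_{t+1}-m_{t+1}$ evaluated at $\bar s_{t+1}=(\bar s_t,A_{t+1})$ and $a_{t+1}=d_{t+1}(\bar s_{t+1},H_{t+1})$.

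The key device is a change of measure at time $t+1$. For any function $f(a_{t+1},H_{t+1})$, summing over $s_{t+1}$ with the indicator that the observed $A_{t+1}$ equals the rule gives
\[
E\Big[\sum_{s_{t+1}} D_{t,t+1}\tfrac{g_{t+1}(s_{t+1}\mid H_{t+1})}{g_{t+1}(A_{t+1}\mid H_{t+1})} f\,\Big|\,A_t,H_t\Big]=E\big[f(d_{t+1}((\bar s_t,A_{t+1}),H_{t+1}),H_{t+1})\mid A_t,H_t\big].
\]
Here the natural value $A_{t+1}$ is playing the role of $s_{t+1}$.

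The third step applies this identity with the true $g$ to $f=m'_{t+1}-m_{t+1}$, then uses the induction hypothesis (\ref{eq:fo}) at $t+1$. Replacing $E[\bar\D_{t+2}\mid A_{t+1},H_{t+1}]$ by $m'_{t+1}-m_{t+1}-\rem_{t+1}$, the weighted-sum term with true weights cancels $E[q'_{t+1}-q_{t+1}\mid\cdot\,]$ up to the nested remainder. What is left over consists of two pieces:
\begin{itemize}
\item $E\big[D_{t,t+1}\{g'/g'-g/g\}(m'_{t+1}-m_{t+1})\big]$, which is the $k=t+1$ term of $\rem_t$;
\item the $g'$-weighted conditional expectation of $\rem_{t+1}$.
\end{itemize}
Unrolling the second piece reproduces the terms $k>t+1$ of $\rem_t$, with the accumulated weights $C'_{t,k}$. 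The product $\prod_{r=t+1}^{k-1}g'_r/g'_r$ appears because each nested remainder is multiplied by primed weights. I would actually organize the argument as a direct telescoping over $k$ rather than a formal induction. Add and subtract the true ratio inside each term $k$, and use iterated expectations with the definition of $m_k$ to show that the true-ratio parts telescope to $m'_t-m_t$.

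The main obstacle is the bookkeeping with the augmented index $\bar s$. The tricky point is that $D_{t,k}(\bar s_k)$ forces $A_u=d_u(\bar s_u,H_u)$ while $s_u$ is integrated freely, so the natural and intervened values are decoupled. I would first check the case $\tau=2$, $t=0$ explicitly to confirm that the indices and the placement of the $g'$ factors in $C'_{t,k}$ match (\ref{eq:rem}). I would also use positivity (Assumption~\ref{ass:support}) to guarantee that all ratios are well defined.
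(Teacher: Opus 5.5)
Your ingredients match the paper's. Its proof is the change-of-measure identity with the true $g_k$ (the auxiliary proposition in the supplement) plus a one-step add-and-subtract decomposition of $\E[q'_k-q_k\mid A_{k-1},H_{k-1}]$, applied recursively for $k=t+1,\ldots,\tau$. Your recursion for $\varphi_{t+1}$ in terms of $\bar\varphi_{t+2}$ is correct, and your induction (or direct telescoping) is the same computation.

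The gap is in the signs, and you cannot repair it while keeping the displayed target. Write $W'$ for your $g'$-weighted sum. Then $\bar\varphi_{t+1}(\bar s_t,Z;\eta')=W'+q'_{t+1}-m'_t$, so $m'_t-q'_{t+1}=W'-\bar\varphi_{t+1}$, and step two should read
\[
m'_t-m_t=-\E[\bar\varphi_{t+1}\mid A_t,H_t]+\E[W'\mid A_t,H_t]+\E[q'_{t+1}-q_{t+1}\mid A_t,H_t].
\]
The first two signs are reversed relative to yours. Step three also does not close on its own terms. With your signs, once the true-weight part cancels $\E[q'_{t+1}-q_{t+1}\mid A_t,H_t]$, the $k=t+1$ term of $\rem_t$ appears with a minus sign, while the $g'$-weighted $\rem_{t+1}$ appears with a plus sign. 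They do not add up to $\rem_t$. The induction hypothesis you invoke already fails at the base. For $t+1=\tau$ we have $\bar\varphi_{\tau+1}=Y-m'_\tau$, so $\E[\bar\varphi_{\tau+1}\mid A_\tau,H_\tau]=m_\tau-m'_\tau$, not $m'_\tau-m_\tau$.

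The root cause is that (\ref{eq:fo}) as displayed has the wrong sign on its left-hand side. Take $g'_k=g_k$ for all $k>t$ and $m'_t\neq m_t$. Then $\rem_t=0$, while the multiple-robustness proposition of Section~\ref{sec:sdr} gives $\E[\bar\varphi_{t+1}(\bar s_t,Z;\eta')\mid A_t,H_t]=m_t-m'_t$, contradicting (\ref{eq:fo}).

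Now redo the argument with correct signs, using the hypothesis $m_{t+1}-m'_{t+1}=\E[\bar\varphi_{t+2}\mid A_{t+1},H_{t+1}]+\rem_{t+1}$, which does hold at $t+1=\tau$. Both leftover pieces then enter with a minus sign. Using $C'_{t,k}=D_{t,t+1}(\bar s_{t+1})\frac{g'_{t+1}(s_{t+1}\mid H_{t+1})}{g'_{t+1}(A_{t+1}\mid H_{t+1})}C'_{t+1,k}$ for $k\geq t+2$, they sum to $-\rem_t$, and your argument proves
\[
m_t(\bar s_t,a_t,h_t)-m'_t(\bar s_t,a_t,h_t)=\E\big[\bar\varphi_{t+1}(\bar s_t,Z;\eta')\mid A_t=a_t,H_t=h_t\big]+\rem_t(\bar s_t,a_t,h_t;\eta').
\]
This is exactly what the paper's recursion yields when unrolled. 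Its one-step relation carries the ratio difference $\frac{g_k(s_k\mid H_k)}{g_k(A_k\mid H_k)}-\frac{g'_k(s_k\mid H_k)}{g'_k(A_k\mid H_k)}$ and a minus sign on the $g'$-weighted residual $q'_{k+1}-m'_k$. It is also the orientation the paper uses in proving Corollary~\ref{coro:eif}. The $\tau=1$ or $\tau=2$ check you planned would have exposed this; do it before asserting that the leftover matches (\ref{eq:rem}).
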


  This lemma shares important connections to the von Mises-type
  expansions used in some of the semi-parametric inference literature
  \citep[e.g.,][]{mises1947asymptotic, vanderVaart98,
    robins2009quadratic}. Inspection of this lemma teaches us that an
  estimator of $\Q_t$ constructed by regressing
  $\D_{t+1}(\bar s_t, Z_i;\hat\eta)$ for each $s_t$, has a doubly robust
  point-wise error equal to $\rem_t(\bar s_t, a_t, h_t;\hat
  \eta)$. Furthermore, this implies the following corollary:
  
\begin{coro}[Efficient Influence Function]\label{coro:eif}
  Assume that $\d$ does not depend on $\P$. The efficient influence
  function for estimating $\theta = \E[q_1(A_1, H_1)]$ in the
  non-parametric model is given by $\bar\D_1(Z)$.
\end{coro}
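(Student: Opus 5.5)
Since the model is nonparametric, the tangent space is saturated, and the efficient influence function is the unique gradient of $\theta(\P)$. I will therefore show that $\bar\D_1(Z)$ is a gradient. Concretely, I will verify that for any $\P'$ (with nuisances $\eta'$) the von Mises expansion
\[
\theta(\P')-\theta(\P) = -\E_{\P}\big[\bar\D_1(Z;\eta')\big] + \rem_0(\eta')
\]
holds, with $\rem_0$ second order. This is exactly Theorem~\ref{theo:sdr} at $t=0$. There, $\Q_0=\theta$, the conditioning event is null, and $\bar\D_1(Z;\eta')=\D_1(Z;\eta')-\theta'$. So the expansion reads $\theta'-\theta=\E[\D_1(Z;\eta')-\theta']+\rem_0(\eta')$, which rearranges to
\[
\theta(\P')-\theta(\P)=\E_{\P}[\D_1(Z;\eta')]-\theta'+\rem_0 = -\E_{\P}[\bar\D_1(Z;\eta')]+\rem_0 .
\]
Note that $\E_{\P'}[\bar\D_1(Z;\eta')]=0$, because $\E_{\P'}[\D_1(Z;\eta')]=\theta'$. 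This follows from the same theorem applied with $\P'$ in place of $\P$, where the remainder vanishes.

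To pass to pathwise differentiability, I will take a smooth one-dimensional submodel $\P_\epsilon$ with score $S$ at $\epsilon=0$, and set $\P'=\P_\epsilon$. Using $\E_{\P_\epsilon}[\bar\D_1(Z;\eta_\epsilon)]=0$, the first-order term becomes
\[
-\E_{\P}[\bar\D_1(Z;\eta_\epsilon)] = \int \bar\D_1(z;\eta_\epsilon)\,\dd(\P_\epsilon-\P)(z) = \epsilon\,\E[\bar\D_1(Z;\eta)S(Z)]+o(\epsilon).
\]
Here I use continuity of $\eta\mapsto\bar\D_1(\cdot;\eta)$ in $L_2$ along the path, together with bounded $g$-ratios, which positivity (Assumption~\ref{ass:support}) provides. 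It then remains to show $\rem_0(\eta_\epsilon)=o(\epsilon)$. Each summand of \eqref{eq:rem} is an expectation of a product of $(g_k'/g_k'-g_k/g_k)$-type differences and $(\Q_k'-\Q_k)$ differences. Both are $O(\epsilon)$ along a smooth path, by differentiability of conditional densities and of the sequential regressions defined in Proposition~\ref{propo:sr}. By Cauchy--Schwarz, together with boundedness of $C'_{t,k}$ and of the finite sums over $\bar s$ (treatment is discrete), each summand is $O(\epsilon^2)$.

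Hence $\frac{d}{d\epsilon}\theta(\P_\epsilon)|_{0}=\E[\bar\D_1 S]$, and $\bar\D_1$ is mean zero with finite variance. It is therefore the canonical gradient, and by uniqueness in the nonparametric model it is the EIF. The requirement that $\d$ not depend on $\P$ is used so that differentiating along $\P_\epsilon$ produces no additional terms from $\d$.

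I expect the main obstacle to be the regularity step. This means showing that the $\Q_{k,\epsilon}$, which are defined through the augmented recursion in $\bar s$, are differentiable in $\epsilon$ uniformly enough in $(\bar s_k,a_k,h_k)$ to give $\|\Q_{k,\epsilon}-\Q_k\|=O(\epsilon)$. I would handle it by induction backward from $\tau$, bounding each regression difference by a density difference times a bounded function. The finiteness of $\bar{\mathcal A}_k$ makes the uniformity in $\bar s$ automatic.
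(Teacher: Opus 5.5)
Your strategy, a von Mises expansion at $t=0$ followed by differentiation along a submodel, is sound. However, the step that produces the expansion contains a sign error, and everything downstream depends on it. Since $\bar\D_1(Z;\eta')=\D_1(Z;\eta')-\theta'$, you have $\E_\P[\D_1(Z;\eta')]-\theta'=+\E_\P[\bar\D_1(Z;\eta')]$, not $-\E_\P[\bar\D_1(Z;\eta')]$.

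Read literally, Theorem~\ref{theo:sdr} at $t=0$ therefore gives $\theta(\P')-\theta(\P)=\E_\P[\bar\D_1(Z;\eta')]+\rem_0(\eta')$. Running your pathwise argument on that yields $\frac{\dd}{\dd\epsilon}\theta(\P_\epsilon)\big|_{\epsilon=0}=-\E[\bar\D_1(Z)S(Z)]$, i.e.\ the gradient would be $-\bar\D_1$.

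The underlying problem is that (\ref{eq:fo}) is misprinted: its left-hand side should be $\Q_t-\Q_t'$. You can confirm this from the paper's own results. Take $\P'$ with the same treatment mechanism $g'=g$ but a different outcome regression. Then $\rem_0(\eta')=0$, and the multiply robust proposition in \S\ref{sec:sdr} gives $\E_\P[\D_1(Z;\eta')]=\theta$. The printed display would then force $\theta'-\theta=\theta-\theta'$, i.e.\ $\theta(\P')=\theta(\P)$, which is false in general.

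The supplementary proof actually establishes the corrected version. There the subscript $1$ marks primed quantities, and the augmentation terms enter with a minus sign. The corrected version gives $\theta(\P')-\theta(\P)=-\E_\P[\bar\D_1(Z;\eta')]-\rem_0(\eta')$. So the expansion you want is true (the sign of the second-order term is immaterial). But your algebraic slip exactly cancels the misprint, so as written the key identity is not justified. Derive it directly instead, using the backward telescoping argument of the supplement built on $\E[q_k'(\bar s_{k-1},A_k,H_k)\mid A_{k-1},H_{k-1}]=\E[\sum_{s_k}D_k(\bar s_k)\frac{g_k(s_k\mid H_k)}{g_k(A_k\mid H_k)}\Q_k'(\bar s_k,A_k,H_k)\mid A_{k-1},H_{k-1}]$.

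Once the sign is repaired, your route differs from the paper's mainly in the direction of the expansion. The paper freezes the nuisance at $\eta$ and moves the truth to $\P_\epsilon$. It writes $\theta(\P_\epsilon)=\theta(\P)+\int\{\D_1(z;\eta)-\theta\}\,\dd\P_\epsilon(z)$ plus a second-order term, so the linear term differentiates directly to $\E[\bar\D_1 S]$. This needs neither the mean-zero identity under $\P_\epsilon$ nor $L_2$-continuity of $\eta\mapsto\bar\D_1(\cdot;\eta)$. Both directions still need the remainder to be $o(\epsilon)$; the paper simply asserts this, and your backward induction would supply it. The paper's preliminary delta-method computation for the NPMLE only motivates the candidate and is not logically required.

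Finally, Assumption~\ref{ass:support} is only a support condition. It does not bound $g_u(s_u\mid H_u)/g_u(A_u\mid H_u)$ when $L_t$ is continuous. The bounded-weight condition you use for $\bar\D_1\in L_2(\P)$ and for your Cauchy--Schwarz bounds must therefore be imposed as a separate strong positivity assumption.
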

%%%%%%%%%%%%%%%%%%%%%%%%%%%%%%%%%%%%%%%%%%%%%%%%%%%%%%%%%%%%%%%%%%%%%%%%%%%%%%% 

% Intuitively, sequential double robustness of $\D_1$ occurs
% because Assumption\ref{ass:inv} allows us to use the change of
% variable formula to change measures using the Radon-Nikodym derivative
% $\r_t$. Specifically, we have two equivalent representations of
% $\Q_t$:
% \begin{align*}
    %     \Q_t(a_t, h_t)&=\E[\Q_{t+1}(A^\d_{t+1}, H_{t+1})\mid A_t=a_t,
                          %                           H_t=h_t]\\
    %                         &=\E[\Q_{t+2}(A_{t+2}, H_{t+2})\r_{t+1}(A_{t+1}, H_{t+1})\mid A_t=a_t, H_t=h_t]
                                %   \end{align*}

%\begin{remark}
    If the modified treatment policy depends only on the contemporaneous natural value of treatment, i.e., if $d_t(\bar a_t, h_t^d) = d_t(a_t, h_t^d)$, then all the above results reduce to the results presented by \cite{diaz2023nonparametric}. If the intervention only depends on some of the history of natural values of treatment, then the argument $\bar s_{t-1}$ in the function $q_t$ can be modified to only include values on which the intervention depends. We elaborate on this point in the Supplementary Materials \S\ref{app:example}.
%\end{remark}

\section{Estimation and statistical inference}\label{sec:estima}

In this section we assume that preliminary estimators $\hat\r_t$, and
$\hat\Q_t$ are available. These estimators may be obtained from
flexible regression methods from the machine learning literature or
ensembles thereof \citep{vanderLaan&Polley&Hubbard07}. As
implied by Theorem~\ref{theo:sdr}, the consistency of these estimators
will determine the consistency of our estimators of the parameter
$\theta$. We will use Super Learner \citep{vanderLaan&Polley&Hubbard07}, an ensemble regression algorithm
known that builds a convex combination of regression algorithms in a
user-given library, with weights chosen to minimize the
cross-validated prediction error. 

A na\"ive plug-in estimator for $\Q_t:t=1,\ldots,\tau$ may be obtained using
%Proposition~\ref{propo:sr} as follows. Start by running a regression
%of $q_{\tau+1} = \Q_{\tau+1} = Y_i$ on $(A_{\tau,i},H_{\tau,i})$ to
%obtain an estimator of $m_\tau(A_\tau, H_\tau)$. Then, for each
%$\bar s_{\tau-1}$, compute the predictions
%$\hat q_{\tau}(\bar s_{\tau-1}, A_\tau, H_\tau)$ according to the definitions in the proposition. For each
%$\bar s_{\tau-1}$, use these predictions as a pseudo-outcome in a
%regression on $(A_{\tau-1}, H_{\tau-1})$, to obtain an estimate
%$\hat\Q_{\tau-1}(\bar s_{\tau-1}, a_{\tau-1}, h_{\tau-1})$ and iterate the
%process until obtaining an estimate $\hat\Q_1(s_1, a_1, h_1)$. 
%Alternatively, $\Q_t:t=1,\ldots,\tau$ may be estimated using
pooled regressions in an augmented dataset. Let
$K_t = |\bar{\mathcal{A}}_t|$. Define the augmented dataset
$\mathcal{D}_\tau = \left\{(i, Z_i, \bar s_\tau) : i \in \{1, \dots,
  n\},\; \bar s_\tau \in \bar{\mathcal{A}}_\tau \right\}$. That is,
$\mathcal{D}_\tau$ is a dataset with $n K_\tau$ rows and $\tau$
additional columns corresponding to possible exposure sequences
$\bar s_\tau$. Start by running a regression of $Y_i$ on
$(A_{\tau,i},H_{\tau,i})$ to obtain an estimator of
$m_\tau(a_\tau, h_\tau) = E[Y \mid A_\tau = a_\tau, H_\tau = h_\tau]$
among the original data. Generate predictions
$q_\tau(\bar s_{\tau-1}, A_{\tau}, H_{\tau}) = m_\tau(d_\tau((\bar
s_{\tau-1}, A_\tau), H_\tau), H_\tau) = E[Y \mid A_\tau = d_\tau((\bar
s_{\tau-1}, A_\tau), H_\tau), H_\tau]$ using the augmented data
$\mathcal{D}_\tau$. Because predictions are generated using the
augmented data, $q_\tau(\bar s_{\tau-1}, A_{\tau}, H_{\tau})$ is a
$n K_\tau$ length vector. Recursively for $t=\tau-1,\ldots, 1$:\begin{enumerate}[label=(\arabic*)]
    \item Add $q_{t+1}(\bar s_{t}, A_{t+1}, H_{t+1})$ as a new column to $\mathcal{D}_{t+1}$. 
    \item Construct $\mathcal{D}_{t}$ by subsetting $\mathcal{D}_{t+1}$ to unique combinations of $(i, \bar s_{t})$, such that $\mathcal{D}_{t}$ has $n K_t$ rows.
    \item Obtain an estimator
    $m_t(\bar{s}_t,a_t, h_t) = \E[q_{t+1}(\bar s_{t}, A_{t+1}, H_{t+1}) \mid \bar{s}_{t}, A_t = a_t, H_t = h_t]$
    by regressing $q_{t+1}(\bar s_{t}, A_{t+1}, H_{t+1})$ on $(\bar{s}_{t}, A_{t}, H_{t})$ among $\mathcal{D}_t$.
    \item Generate predictions $q_t(\bar{s}_{t-1}, A_t, H_t) = m_t((\bar{s}_{t-1},A_t),d_t((\bar{s}_{t-1}, A_t), H_t), H_t).$
\end{enumerate}
An example of this approach is shown in Figure~\ref{fig:pooled}. Note that this approach exchanges $\tau K_\tau$ regressions in a size-$n$ dataset for $\tau$ regressions in a sequence of $n, n K_{\tau -1}, \dots, n K_1$ sized datasets which may be more computationally efficient. A similar pooling strategy has also previously been used for estimating so-called time-smooth counterfactual curves \citep{susmann2025computationally}. 
Here, we give the algorithm in full generality for interventions that may depend on the entire history $\bar s_t$. When the intervention depends only on selected components of $\bar s_t$, the augmented data need only carry those components. For example, delay policies require only a rolling lagged window of natural treatment values with length is determined by the magnitude of the delay. We give concrete examples in the Supplementary Materials \S\ref{app:example}.
\begin{figure}[H]
\footnotesize
\textit{Step 1.} Estimate $m_3(a_3,a_2,a_1)$ by regressing $Y$ on $(A_3,A_2,A_1)$ and create the augmented dataset.
\[
\footnotesize
\begin{array}{cccccc}
i & A_1 & A_2 & A_3 & Y \\
\hline
1 & a_{1,1} & a_{1,2} & a_{1,3} & y_1\\
\vdots & \vdots & \vdots & \vdots & \vdots \\
n & a_{n,1} & a_{n,2} & a_{n,3} & y_n\\
\end{array} \mapsto
\begin{array}{ccccccccc}
i & A_1 & A_2 & A_3 & s_1 & s_2 & Y \\
\hline
1 & a_{1,1} & a_{1,2} & a_{1,3} & 0 & 1 & y_1\\
1 & a_{1,1} & a_{1,2} & a_{1,3} & 0 & 0 & y_1\\
1 & a_{1,1} & a_{1,2} & a_{1,3} & 1 & 1 & y_1\\
1 & a_{1,1} & a_{1,2} & a_{1,3} & 1 & 0 & y_1\\
\vdots & \vdots & \vdots & \vdots & \vdots & \vdots & \vdots & \vdots \\
n & a_{n,1} & a_{n,2} & a_{n,3} & 1 & 0 & y_n\\
\end{array}
\]
\textit{Step 2.} Generate predictions $q_3(\bar{s}_2, A_3) = m_3(d_3(\bar{s}_2,A_3),A_2,A_1)$ in the augmented dataset and subset to the unique rows of $(i, s_1, s_2)$. Estimate $m_2(s_2,s_1,A_2,A_1)$ by regressing $q_3(\bar{s}_2, A_3)$ on $(s_2,s_1,A_2,A_1)$.
\[
\footnotesize
\begin{array}{ccccccc}
i & A_1 & A_2 & s_1 & s_2 & q_3(\bar{s}_2, A_3) \\
\hline
1 & a_{1,1} & a_{1,2} & 0 & 1 & q_3((0,0), a_{1,3})\\
1 & a_{1,1} & a_{1,2} & 0 & 0 & q_3((0,1), a_{1,3})\\
1 & a_{1,1} & a_{1,2} & 1 & 1 & q_3((1,1), a_{1,3})\\
1 & a_{1,1} & a_{1,2} & 1 & 0 & q_3((1,0), a_{1,3})\\
\vdots & \vdots & \vdots & \vdots & \vdots & \vdots \\
n & a_{n,1} & a_{n,2} & 1 & 0 & q_3((1,0), a_{n,3})
\end{array}
\]
\textit{Step 3.} Generate predictions $q_2(s_1, A_2) = m_2(A_2,s_1,d_2(s_1,A_2),A_1)$ in the augmented dataset and subset to the unique rows of $(i, s_1)$. Estimate $m_1(s_1,A_1)$ by regressing $q_2(s_1, A_2)$ on $(s_1,A_1)$.
\[
\footnotesize
\begin{array}{cccccccc}
i & A_1 & A_2 & s_1 & s_2 & q_3(\bar{s}_2, A_3) & q_2(s_1, A_2) \\
\hline
1 & a_{1,1} & a_{1,2} & 0 & 1 & q_3((0,0), a_{1,3}) & q_2(0, a_{1,2})\\
1 & a_{1,1} & a_{1,2} & 0 & 0 & q_3((0,1), a_{1,3}) & q_2(0, a_{1,2})\\
1 & a_{1,1} & a_{1,2} & 1 & 1 & q_3((1,1), a_{1,3}) & q_2(1, a_{1,2})\\
1 & a_{1,1} & a_{1,2} & 1 & 0 & q_3((1,0), a_{1,3}) & q_2(1, a_{1,2})\\
\vdots & \vdots & \vdots & \vdots & \vdots & \vdots & \vdots \\
n & a_{n,1} & a_{n,2} & 1 & 0 & q_3((1,0), a_{n,3}) & q_2(1, a_{n,2})\\
\end{array} \mapsto 
\begin{array}{cccccccc}
i & A_1 & s_1 & q_2(s_1, A_2) \\
\hline
1 & a_{1,1} & 0 & q_2(0, a_{1,2})\\
1 & a_{1,1} & 1 & q_2(1, a_{1,2})\\
\vdots & \vdots & \vdots & \vdots \\
n & a_{n,1} & 1 & q_2(1, a_{n,2})\\
\end{array}
\]
\textit{Step 4.} Generate predictions $q_1(A_1) = m_1(A_1,d_1(A_1))$ in the augmented dataset and subset to the unique rows of $i$.
\[
\footnotesize
\begin{array}{cccccccc}
i & A_1 & s_1 & q_2(s_1, A_2) & q_1(A_1) \\
\hline
1 & a_{1,1} & 0 & q_2(0, a_{1,2}) & q_1(a_{1,1})\\
1 & a_{1,1} & 1 & q_2(1, a_{1,2}) & q_1(a_{1,1})\\
\vdots & \vdots & \vdots & \vdots & \vdots \\
n & a_{n,1} & 1 & q_2(1, a_{n,2}) & q_1(a_{n,1})\\
\end{array} 
\mapsto
\begin{array}{cccccccc}
i & q_1(A_1) \\
\hline
1 &  q_1(a_{1,1})\\
\vdots & \vdots  \\
n & q_1(a_{n,1})\\
\end{array} 
\]
\caption{Schematic of the pooled-regression approach to Proposition~\ref{propo:sr} with $\tau = 3$, $A_t \in \{0,1\}$, and no covariates.}
\label{fig:pooled}
\end{figure}

Although this na\"ive estimator is
simple to implement, its error is first order in the error of each
regression estimate at time $t$. This means that it is not possible to
use this algorithm to construct an estimator that is generally $\sqrt{n}$-consistent under
data-adaptive regression. To achieve this, we will leverage two
approaches that rely on the canonical gradient of
Proposition~\ref{propo:sr}: targeted minimum loss-based estimation
(TMLE), and sequential doubly robust (SDR) regression. Both of these
estimators are $\sqrt{n}$-consistent, efficient, and asymptotically
normal under assumptions, but the SDR has the advantage over TMLE that
it has a simpler error structure therefore leading to potential gains
in finite sample bias and consistency, whereas the advantage of the
TMLE over SDR is that it is a substitution estimator therefore
offering protection in cases where the parameter space is bounded and
the true value of the parameter is close to the boundary. The
discussion about the difference between TMLE and SDR estimators is
given elsewhere \citep[e.g.,][]{molina2017multiple,
  luedtke2017sequential, diaz2023nonparametric} and is therefore
omitted here.

\subsection{Targeted minimum loss-based estimator}\label{sec:tmle}

We use sample splitting to avoid imposing Donsker conditions on the
regression estimators \citep{klaassen1987consistent,zheng2011cross,
  chernozhukov2018double}. Let ${\cal V}_1, \ldots, {\cal V}_J$ denote
a random partition of the index set $\{1, \ldots, n\}$ into $J$
prediction sets of approximately the same size. For each $j$, the
associated training sample is given by
${\cal T}_j = \{1, \ldots, n\} \setminus {\cal V}_j$. We let
$\hat \eta_{j}$ denote the estimator of $\eta$ obtained by training
the corresponding prediction algorithm using only data in the sample
${\cal T}_j$. Further, we let $j(i)$ denote the index of the
validation set which contains observation $i$.

The targeted minimum loss-based estimator $\thetatmle$ is computed as
a substitution estimator that uses an estimate $\tilde\Q_{1,j(i)}$
constructed to solve the cross-validated efficient influence function
estimating equation
$\Pn \{\D_1(\cdot, \tilde\eta_{j(\cdot)}) - \thetatmle\}= 0$. The
construction of $\tilde\Q_{1,j(i)}$ is motivated by the observation
that, for an appropriately defined variable
$\omega(\bar s_t,a_t, h_t)$ the efficient influence function of
$\theta$ can be expressed as a sum of terms of the form
$\omega(\bar s_t,a_t, h_t)\{q_{t+1}(\bar s_t, a_{t+1}, h_{t+1}) -
\Q_t(\bar s_t, a_t, h_t)\}$, which, for fixed $\bar s_t$, takes the
form of score functions $\omega(W)\{M - \E(M\mid W)\}$ for
appropriately defined variables $M$ and $W$ and some weight function
$\omega$. %It iswell known that i
If $\E(M\mid W)$ is estimated within a weighted generalized linear
model with canonical link that includes an intercept, then the
weighted MLE estimate solves the score equation
$\sum_i\omega(W_i)\{M_i - \hat\E(M_i\mid W_i)\}=0$. TMLE uses this
observation to iteratively tilt preliminary estimates of $\Q_t$
towards a solution of the efficient influence function estimating
equation. The algorithm is defined as follows:

\begin{enumerate}[label=Step \arabic*., align=left, leftmargin=*]
\item Initialize $\tilde\eta =\hat\eta$ and
  $\tilde m_{\tau+1,j(i)}=\tilde q_{\tau+1,j(i)} = Y_i$.
\item For $k=1,\ldots,\tau$, and for each $\bar s_k$, compute the weights
  $\omega_{k,i}(\bar s_k) = D_{0,k}(\bar s_k)\left(\prod_{u=1}^k
      \frac{g_u(s_u\mid H_u)}{g_u(A_u\mid
        H_u)}\right)$.
\item For $t=\tau,\ldots,1$, for each $\bar s_t$:
  \begin{itemize}
  \item Fit the generalized linear tilting model
    $\link \tilde\Q_t^\epsilon(\bar s_t, A_{t,i},H_{t,i}) = \epsilon + \link
      \tilde\Q_{t,j(i)}(\bar s_t, A_{t,1}, H_{t,i})$, where $\link(\cdot)$
    is the canonical link. The parameter $\epsilon$ may be estimated
    by running a generalized linear model of the pseudo-outcome
    $\tilde q_{t+1,j(i)}(\bar s_t, A_{t+1,i}, H_{t+1,i})$ with only an
    intercept term, an offset term equal to
    $\link \tilde\Q_{t,j(i)}(\bar s_t, A_{t,i},H_{t,i})$, and weights
    $\omega_{t,i}(\bar s_t)$, using all the data points in the sample. An
    outcome bounded in an interval $[a,b]$ may be analyzed with
    logistic regression (i.e., $\link=\logit$) by mapping it to an
    outcome $(0,1)$ through the transformation
    $(Y - a) / (b - a)(1-2\epsilon) + \epsilon$ for some small value
    $\epsilon>0$. This approach has robustness advantages compared to
    fitting a linear model as it guarantees that the predictions in
    the next step remain within the outcome space \citep[see
    ][]{Gruber2010t}.
  \item Let $\hat\epsilon$ denote the maximum likelihood estimate, and
    update the estimates as
    \begin{align*}
      \link \tilde\Q_{t, j(i)}^{\hat\epsilon}(\bar s_t, A_{t,i},H_{t,i})&=
                                                                \hat\epsilon
                                                                     +
                                                                     \link\tilde\Q_{t,
                                                                     j(i)}(\bar s_t,
                                                                     A_{t,i},H_{t,i})\\
      \link \tilde q_{t, j(i)}^{\hat\epsilon}(\bar s_{t-1}, A_{t,i}, H_{t,i})&=
                                                                   \hat\epsilon
                                                                          +
                                                                          \link\tilde\Q_{t,
                                                                          j(i)}((\bar s_{t-1}, A_{t,i}),d((\bar s_{t-1}, A_{t,i}),
                                                                          H_{t,i}),
                                                                          H_{t,i}).
    \end{align*}
  \item Update $\tilde\Q_{t, j(i)}= \tilde\Q_{t, j(i)}^{\hat\epsilon}$, $t = t-1$, and iterate. 
  \end{itemize}
\item The TMLE is defined as
  $\thetatmle=\frac{1}{n}\sum_{i=1}^n\tilde q_{1,j(i)}(A_{1,i},H_{1,i})$.
\end{enumerate}
The iterative procedure and the score equation argument above
guarantee that the above estimates solve the efficient influence
function estimating equation, which is crucial to prove that
$\sqrt{n}(\thetatmle-\theta)$ converges to a random variable
$N(0;\var[\bar\D_1(Z;\eta)])$, provided that
$\rem_0(\hat\eta)=o_P(n^{-1/2})$ and that the true and estimated
weights $\omega_{t}(s_i)$ are bounded. The proof of this result
follows standard arguments in the analysis of TMLE presented elsewhere
\citep[e.g.,][]{zheng2011cross, diaz2023nonparametric} and is
therefore omitted.

\subsection{Sequential regression estimator using SDR unbiased
  transformations}\label{sec:sdr}

In this section we use the multiply robust unbiased transformation in
expression (\ref{eq:fo}) to obtain an estimate of $\Q_t$.  We say that $\D_{t+1}$ is a multiply
robust unbiased transformation \citep{buckley1979linear,rubin2007doubly,diaz2013targeted,kennedy2017non,luedtke2017sequential} for $\Q_t$ due to the following
proposition, which is a straightforward consequence of
Theorem \ref{theo:sdr}.
\begin{proposition}
  Let $\eta'$ be such that either $\Q'_s=\Q_s$ or $\r'_s=\r_s$ for all
  $s>t$. Then we have
  $\E\big[\D_{t+1}(\bar s_t, Z;\eta')\mid A_t=a_t,H_t=h_t\big] =\Q_t(\bar s_t, a_t,h_t)$.
\end{proposition}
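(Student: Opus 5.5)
The argument is a direct application of Theorem~\ref{theo:sdr}, followed by a check that the remainder vanishes under the stated condition. First I would rewrite the expansion (\ref{eq:fo}) using the definition $\bar\D_{t+1}(\bar s_t, Z;\eta') = \D_{t+1}(\bar s_t, Z;\eta') - \Q_t'(\bar s_t, A_t, H_t)$. Since $\Q_t'(\bar s_t, A_t, H_t)$ is fixed once we condition on $A_t=a_t, H_t=h_t$, it can be taken outside the conditional expectation. The two copies of $\Q_t'(\bar s_t,a_t,h_t)$ then cancel, which gives
\[
\E\big[\D_{t+1}(\bar s_t, Z;\eta')\mid A_t=a_t,H_t=h_t\big] = \Q_t(\bar s_t,a_t,h_t) - \rem_t(\bar s_t,a_t,h_t;\eta').
\]
The proposition is therefore equivalent to showing $\rem_t(\bar s_t,a_t,h_t;\eta')=0$ whenever, for every $s>t$, either $\Q_s'=\Q_s$ or $\r_s'=\r_s$.

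Next I would inspect the remainder (\ref{eq:rem}) term by term. It is a finite sum over $k>t$ and over $\bar s_{(t+1):k}$. Each summand is the conditional expectation of the product of three factors:
\begin{enumerate*}[(i)]
\item $C_{t,k}'$;
\item the difference of density ratios $g_k'(s_k\mid H_k)/g_k'(A_k\mid H_k) - g_k(s_k\mid H_k)/g_k(A_k\mid H_k)$;
\item the difference $\Q_k'(\bar s_k,A_k,H_k)-\Q_k(\bar s_k,A_k,H_k)$.
\end{enumerate*}
The index $k$ ranges only over $k>t$, so the hypothesis applies to each $k$ appearing in the sum. If $\Q_k'=\Q_k$, factor (iii) is identically zero. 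If $\r_k'=\r_k$, factor (ii) is identically zero. Either way the integrand vanishes pointwise, so each summand is zero and hence $\rem_t=0$. Factor (i), the product involving $g_r'$ for $t<r<k$, needs no control at all, because it multiplies an integrand that is already zero.

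The only step requiring care is confirming that the conditional expectations in (\ref{eq:rem}) and (\ref{eq:fo}) are well defined, so that the algebra is legitimate. This holds under the implicit requirement, used throughout the paper, that the ratios $g_r'(s_r\mid H_r)/g_r'(A_r\mid H_r)$ are finite on the relevant support; Assumption~\ref{ass:support} covers the true $g$, and for $g'$ the same boundedness is implicitly assumed. I would state this explicitly. I would also note that the case mixing is allowed: $\Q_s'=\Q_s$ may hold at some times and $\r_s'=\r_s$ at others, and the argument handles this because it proceeds one summand at a time. With these remarks the proof is complete, and I do not expect any real obstacle.
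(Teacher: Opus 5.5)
\textbf{Overall.} Your route is the one the paper intends. The paper gives no separate proof and calls the proposition a straightforward consequence of Theorem~\ref{theo:sdr}. Your termwise argument that $\rem_t=0$ is correct, including the mixed case. Each summand of (\ref{eq:rem}) contains factors (ii) and (iii) at the same index $k>t$. Moreover, $\D_{t+1}$ depends only on $\underline\eta_{t+1}$, so the hypothesis on $s>t$ is exactly what is needed.

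\textbf{The cancellation step does not follow from what you cite.} If you substitute $\bar\D_{t+1}=\D_{t+1}-\Q_t'$ into (\ref{eq:fo}) exactly as printed, the two copies of $\Q_t'$ add rather than cancel:
\[
\E\big[\D_{t+1}(\bar s_t,Z;\eta')\mid A_t=a_t,H_t=h_t\big]=2\Q_t'(\bar s_t,a_t,h_t)-\Q_t(\bar s_t,a_t,h_t)-\rem_t(\bar s_t,a_t,h_t;\eta').
\]
This would not give the proposition even when $\rem_t=0$.

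The printed expansion has its left-hand side reversed; it should read $\Q_t-\Q_t'=\E[\bar\D_{t+1}\mid A_t,H_t]+\rem_t$. You can check this with a simple case: take $g'=g$ and $\Q_s'=\Q_s$ for all $s>t$, but $\Q_t'\neq\Q_t$. Then $\rem_t=0$ and $\D_{t+1}(\cdot;\eta')=\D_{t+1}(\cdot;\eta)$, so $\E[\bar\D_{t+1}\mid\cdot]=\Q_t-\Q_t'$. The printed version would then force $\Q_t'=\Q_t$.

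\textbf{Your displayed identity is nonetheless correct.} The identity $\E[\D_{t+1}\mid\cdot]=\Q_t-\rem_t$, with $\rem_t$ exactly as defined in (\ref{eq:rem}), is what the recursion in the proof of Theorem~\ref{theo:sdr} actually establishes. It follows by induction on $t$:
\begin{enumerate*}[(a)]
\item write $\D_{t+1}(\bar s_t)$ as $q'_{t+1}(\bar s_t)$ plus a $g'_{t+1}$-weighted sum of $\D_{t+2}(\bar s_{t+1})-\Q'_{t+1}(\bar s_{t+1})$;
\item condition that weighted residual on $(A_{t+1},H_{t+1})$ and use the induction hypothesis;
\item apply the auxiliary proposition in the supplement to $q'_{t+1}$ and $q_{t+1}$.
\end{enumerate*}
The resulting pieces combine into $-\rem_t$.

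So either state that identity directly or explicitly flag the sign correction; do not attribute the cancellation to (\ref{eq:fo}) as written. Since $\D_{t+1}(\bar s_t,Z;\eta')$ does not involve $\Q_t'$ at all, routing through $\bar\D_{t+1}$ is unnecessary anyway. With this fix the rest of your argument completes the proof.
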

The above motivates the construction of the sequential regression
estimator by regressing an estimate of the data
transformation $\D_{t+1}(\bar s_t, Z;\eta)$ on $(A_t,H_t)$ for each $\bar s_t$, starting at
$\D_{\tau+1}(Z;\eta)=Y$. For
preliminary cross-fitted estimates
$\hat\r_{1,j(i)},\ldots,\hat \r_{\tau, j(i)}$, the estimator is
defined as follows:
\begin{enumerate}[label=Step \arabic*, align=left, leftmargin=*]
\item Initialize  $\D_{\tau+1}(\bar s_\tau, Z_i;\underline{\check\eta}_{\tau,j(i)})= Y_i$ for $i=1,\ldots,n$.
\item For $t=\tau,\ldots,1$ and each $s_t\in \mathcal A_t$:
  \begin{itemize}
  \item Compute the pseudo-outcome
    $\check Y_{t+1,i} =
    \D_{t+1}(\bar s_t, Z_i;\underline{\check\eta}_{t,j(i)})$ for all
    $i=1,\ldots,n$.
  \item For $j=1,\ldots,J$:
    \begin{itemize}
    \item Regress $\check Y_{t+1,i}$ on $(A_{t,i,}H_{t,i})$ using any
      regression technique and using only data points
      $i\in \mathcal T_{j}$.\label{step:2}
    \item Let $\check \Q_{t,j}$ denote the output, update
      $\underline{\check\eta}_{t, j} = (\hat\r_{t,j}, \check
      \Q_{t,j},\ldots,\hat\r_{\tau,j}, \check \Q_{\tau,j})$, and
      iterate.
    \end{itemize}
  \end{itemize}
\item Define the sequential estimator regression as
  $\thetasr=\frac{1}{n}\sum_{i=1}^n\D_1(Z_i;\check\eta_{j(i)})$.
\end{enumerate}
Like its TMLE counterpart, this estimator satisfies $\sqrt{n}(\thetasr-\theta)$ converges to a random variable
$N(0;\var[\bar\D_1(Z;\eta)])$, provided that
$\rem_0(\hat\eta)=o_P(n^{-1/2})$. This can be used to construct
asymptotically valid confidence intervals and hypothesis tests, for example a $(1-\alpha)100\%$ confidence interval may be constructed as $\thetasr\pm z_{1-\alpha/2}\hat\sigma/\sqrt{n}$, where $\hat\sigma^2$ is the empirical variance of $\bar\D_1(Z_i;\check\eta_{j(i)})$.
%%%%%%%%%%%%%%%%%%%%%%%%%%%%%%%%%%%%%%%%%%%%%%%%%%%%%%%%%%%%%%%%%%%%%%%%%%%%%%% 

\section{Numerical illustration}\label{sec:illustrate}

We use Monte Carlo simulation to illustrate the performance of the
TMLE and SDR estimators. For $n \in \left\{250, 500,1000,5000,10000
\right\}$, data were simulated from the data generating mechanism
given in Figure~\ref{fig:dgp}.
%\begin{align*}
%  L_t &= (0.5 L_{t-1} + \epsilon_t)\one\{Y_{t-1} = 0\}
%        +\text{NULL}\one\{Y_{t-1} = 1\}\\
%  P(A_t = 1 \mid L_t) &= \text{logit}^{-1}(-1.5 + 0.3 L_t)
%                        \one\{A_{t-1} = 0, Y_{t-1} = 0\} \one\{A_{t-1}
%                        = 1,  Y_{t-1}=0\} + A_{t-1}\one\{Y_{t-1} = 1\}
%  \\
%  P(Y_t = 1 \mid A_t, L_t) &= \text{logit}^{-1}(-2 + 0.4 L_t - 0.8
%                             A_t) \one\{ Y_{t-1} = 0 \}+ \one\{Y_{t-1}
%                             = 1\}.%
                              %\end{align*}
\begin{figure}[!htp]
Let $L_0 \sim \mathcal{N}(0, 1)$, $\epsilon_t \sim \mathcal{N}(0,1)$,
and $A_0 = 0$, $Y_0 = 0$. For $t \in \{1, \dots, 5\}$, define:
\begin{align*}
L_t &= \begin{cases} 0.5 L_{t-1} + \epsilon_t & \text{if } Y_{t-1} = 0 \\ \emptyset & \text{if } Y_{t-1} = 1 \end{cases}\\
P(A_t = 1 \mid L_t) &= \begin{cases} \text{logit}^{-1}(-1.5 + 0.3 L_t) & \text{if } A_{t-1} = 0 \text{ and } Y_{t-1} = 0 \\ 1 & \text{if } A_{t-1} = 1 \text{ and } Y_{t-1}=0\\ A_{t-1} & \text{if } Y_{t-1} = 1\end{cases} \\
P(Y_t = 1 \mid A_t, L_t) &= \begin{cases} \text{logit}^{-1}(-2 + 0.4 L_t - 0.8 A_t) & \text{if } Y_{t-1} = 0 \\ 1 & \text{if } Y_{t-1} = 1. \end{cases}
\end{align*}
\caption{Data generating mechamism for the numerical illustration in
  \S\ref{sec:illustrate}}
\label{fig:dgp}
\end{figure}
This data generating mechanism corresponds to a survival analysis scenario with time-varying confounding and conditionally degenerate exposure assignment. At each sample size, we generated $1000$ data-sets, and for each data-set, using both the SDR and TMLE, we estimated the expected value of $Y_5$ under a hypothetical intervention where exposure ($A_t$) was delayed by one time-point. Outcome regressions were estimated using a generalized linear model with all possible two-way interactions; the propensity score at each time-point was estimated using a main-term generalized linear model with the logit link. Because the robustness results of the SDR and TMLE have been thoroughly discussed in the existing literature, (see e.g., \cite{diaz2023nonparametric, wen2023intervention, diaz2024survival}) we limit the simulation study to the ideal scenario where all nuisance parameters are correctly specified and estimated at fast enough rates. Performance was summarized using absolute bias, mean squared error (MSE), $n \times \text{MSE}$, and nominal 95\% confidence interval coverage. The true value under the intervention was calculated using Proposition \ref{propo:sr} with correctly specified models and a ``super-population'' of one-million observations. The \texttt{R} code for the simulation can be found at \href{https://github.com/CI-NYC/htlmtp-simulations}{https://github.com/CI-NYC/htlmtp-simulations}.

The results of the simulation study are presented in Table \ref{tab:sim}. As expected, both the TMLE and SDR demonstrate improving metrics with increasing sample size. Interestingly, both estimators have greater than nominal confidence interval indicating conservative variance estimation at sample sizes greater than $n = 1000$. Further, at $n = 250$, % and $n = 500$, 
the bias-variance tradeoff of the estimators appears slightly different with the SDR trading a slight increase in bias for smaller variance (as indicated by $n \times \text{MSE}$), while the TMLE was opposite. However, we caution against drawing comparative conclusions regarding finite sample performance from such a simulation study as estimator performance can vary greatly with the data-generating mechanism \citep{rudolph2023all}.

\begin{table}[H]
    \centering
    \caption{Performance of the SDR and TMLE estimators in a simulation study.}\label{tab:sim}
    \begin{tabular}[t]{llcccc}
        \toprule
        $n$ & Estimator & $\left|\text{Bias}\right|$ & MSE & $n \times \text{MSE}$ & 95\% Cov.\\
        \midrule
        \multirow{2}{*}{250} & SDR & 0.007 & 0.002 & 0.527 & 0.920\\
         & TMLE & 0.004 & 0.003 & 0.665 & 0.893\\
        \addlinespace[0.3em]
        \multirow{2}{*}{500} & SDR & 0.002 & 0.001 & 0.488 & 0.942\\
         & TMLE & 0.002 & 0.001 & 0.495 & 0.941\\
        \addlinespace[0.3em]
        \multirow{2}{*}{1,000} & SDR & 0.000 & 0.000 & 0.472 & 0.959\\
         & TMLE & 0.000 & 0.000 & 0.473 & 0.958\\
        \addlinespace[0.3em]
        \multirow{2}{*}{5,000} & SDR & 0.000 & 0.000 & 0.456 & 0.968\\
         & TMLE & 0.000 & 0.000 & 0.455 & 0.967\\
        \addlinespace[0.3em]
        \multirow{2}{*}{10,000} & SDR & 0.000 & 0.000 & 0.454 & 0.966\\
         & TMLE & 0.000 & 0.000 & 0.453 & 0.967\\
        \bottomrule
    \end{tabular}
\end{table}

\section{Motivating application}
%%%%%%%%%%%%%%%%%%%%%%%%%%%%%%%%%%%%%%%%%%%%%%%%%%%%%%%%%%%%%%%%%%%%%%%%%%%%%%% 
We applied our proposed approach to estimate the effect of delaying a
risky pain treatment by one month on 12-month incidence of opioid use
disorder (OUD) or overdose diagnosis among non-dual-eligble,
Medicaid-insured patients, aged 19-63 years, with lumbar spinal
stenosis, N=12745. Day 0 was the first day of a pain treatment claim,
which occurred within 30 days of a new-onset diagnosis of lumbar
spinal stenosis (ICD-10 code M4806). Patients must have been
continuously enrolled in Medicaid for the 6 months prior to day 0 (a
``washout period") to determine cohort eligibility, which, in addition
to the above age and eligibility criteria, also included being
opioid-naive, with no history of OUD, no cancer diagnosis, not
pregnant, and community-dwelling. Additionally, they could not have
been hospitalized in the 1 month prior to day 0. This cohort was drawn
from Medicaid T-MSIS Analytic Files (TAF), Demographics, Other
Services, Inpatient, and Pharmacy claims, 2016-2019 from the 25 states
that implemented the ACA prior to 2014: AR, AZ,
CA, CO, CT, DE, HI, IA, IL, KY, MA, MD, MI, MN, ND, NH, NJ, NM, NV,
NY, OH, OR, RI, VT, WA, WV (excluded Maryland due
to poor diagnostic code quality).

We were interested in examining this treatment delay question among
patients with a lumbar spinal stenosis diagnosis, because these
patients typically receive multiple months of pain treatment and
frequently receive pain treatments where concerns have been raised
about their OUD risks. Starting from day 0, we discretized follow-up
into 1-month intervals, ending at month 12. During each month, we
evaluated 1) which pain treatments the patient had a claim for, 2)
whether or not the patient was censored, and 3) presence of an OUD or
overdose diagnosis code or receipt of medication for OUD.  We
collected information on the following baseline covariates during the
washout period: age; sex; race/ethnicity; receipt of: TANF benefits,
SSI benefits; veteran status; household size; marital status; english
as primary language; diagnosis codes for: anxiety, bipolar,
depression, ADHD, other psychiatric illness, alcohol use disorder, or
other substance use disorder; mental health counseling claims;
healthcare utilization variables: number of inpatient hospitalizations
in the washout period prior to one month before day 0, number of
outpatient visits, number of emergency department visits.  We
considered the following set of pain treatment variables: i) opioid
analgesic prescription of $\le7$ days and dose $\le50$ morphine
milligram equivalents (MME), ii) opioid analgesic prescription of
$>7$ days and dose $\le50$, iii) opioid analgesic prescription of dose
$>50$MME, iv) gabapentoid prescription, v) benzodiazepine
prescription, vi) muscle relaxant prescription, vii) duloxetine
prescription, viii) anti-inflammatory prescription, ix) interventional
procedure, x) chiropractic visit, xi) physical therapy visit. Patients
could have, and frequently did have, more than one treatment during a
given month. For the purpose of this illustrative analysis we
dichotomized treatment into the presence vs absence of any of the
following higher-risk treatments, where this binary treatment
variable, $A_t=1$ if any of treatments ii-v were present in month $t$
and $A_t=0$ otherwise.  Patients were censored in month $t$ if: they
had disenrolled from Medicaid by the end of month $t$, they became
dual-eligible for Medicare, or 2019 ended.

The study was approved by the Columbia University Institutional Review
Board. All code to replicate cohort and variable creation and
statistical analyses is given in the GitHub repository:
\url{https://github.com/CI-NYC/m4806-diagnosis}.

We were interested in the effect of delaying a higher-risk pain
treatment by one month as compared to the observed average risk,
assuming all censoring were prevented,
$\E[Y(\bar{A}^d, \bar{C}=\bar{1})- Y(\bar{C}=\bar{1})]$, where
$d_t(a_t, \bar a_{t-1}) =0$ if
$a_t = 1 \text{ and } a_s =0 \text{ for all } s < t$ and
$d_t(a_t, \bar a_{t-1})=a_t$ otherwise.
% \[
%   d_t(a_t, \bar a_{t-1}) =
%   \begin{cases}
%     0 & \text{if } a_t = 1 \text{ and } a_s =0 \text{ for all } s < t, \\
%     a_t & \text{otherwise}.
%   \end{cases}
% \]
We estimated the effect of this delay using both the TMLE and SDR
estimators with 2-fold cross-fitting. The nuisance parameters $\r_t$
and $\Q_t$ were estimated using the Super Learner
\citep{vanderLaan&Polley&Hubbard07} with a GLM and multivariate
adaptive regression splines (MARS) as candidate algorithms. Results
are shown in Table~\ref{tab:medicaid}. Using both the TMLE and SDR to
adjust for right-censoring, we estimated the 12-month probability of
remaining free from an OUD or overdose diagnosis (survivial-type
probability) as 0.963 (95\% CI: 0.96, 0.966). Under a hypothetical
one-month delay of risky pain treatments, the TMLE estimated the
12-month probability of remaining free from an OUD or overdose
diagnosis as 0.985 (95\% CI: 0.92, 1), where the upper bound was
truncated at one. The SDR produced a point estimate of 1.27, outside
the parameter space. This is a well-known limitation of the SDR when
the true value lies near the boundary due to it not being a
substitution estimator (\S\ref{sec:estima}), and we omit the
confidence interval accordingly.

\begin{table}[ht]
\caption{Estimated 12-month OUD survival probability.}\label{tab:medicaid}
\centering
\begin{tabular}{l cc cc ccc}
\toprule
 & \multicolumn{2}{c}{No cens.} & \multicolumn{2}{c}{1-mo.\ delay} & \multicolumn{3}{c}{Delay effect} \\
\cmidrule(lr){2-3} \cmidrule(lr){4-5} \cmidrule(lr){6-8}
Estimator & Estimate & 95\% CI & Estimate & 95\% CI & Estimate & 95\% CI & P \\
\midrule
TMLE & 0.963 & 0.96--0.966 & 0.986 & 0.915--1 & 0.022 & -0.048--0.092 & 0.53 \\
SDR  & 0.963 & 0.96--0.966 & 1.27 & - & - & - & - \\
\bottomrule
\end{tabular}
\end{table}

\section{Conclusion}\label{sec:discussion}

This paper proposes targeted learning estimators for longitudinal
modified treatment policies whose hypothetical interventions may
depend on the \textit{history} of natural treatment values, extending
previous work on modified treatment policies that depend only on the
contemporaneous natural value of treatment.  This extension allows the
double/de-biased machine learning estimation of the effects of
policies that modify the timing or trajectory of treatment
decisions. The proposed augmented sequential regression
representation, together with the corresponding efficient influence
function, yields practical TMLE and SDR estimators for this broader
class of interventions. %We illustrated the sampling distribution of
%the estimators in a numerical study, and demonstrated its use in an
%application to examining the effect of delaying risky pain treatments
%on opioid use disorder risk.

Several directions of research remain open. We have focused on discrete exposures, for
which the relevant natural treatment histories can be represented by
summing over possible exposure sequences. Many applications motivating
modified treatment policies involve continuous, multivariate, or mixed
discrete-continuous treatments, including medication doses and
combinations of therapies. Future work should
develop efficient estimators for G-LMTPs in these settings, avoiding
explicit enumeration of treatment histories and replacing conditional
mass-function ratios with appropriate density-ratio or Riesz representers. Other useful directions include
scalable algorithms for long follow-up, diagnostics for support
violations induced by history-dependent policies, and extensions to
settings with censoring, competing risks, interference, or continuous
time.

\section*{Acknowledgments}
The authors are grateful to Alex Luedtke for insightful conversations
that helped clarify the motivation for the generalization developed in
this paper, as well as Anton Hung for the creation of the dataset used
in our motivating application.

Iván Díaz was supported by a Patient-Centered Outcomes Research
Institute (PCORI) Project Program Award (ME-2024C3-43044). Pawe{\l}
Morzywo{\l}ek was partially supported through a National Institutes of
Health (NIH) Award (5UM1AI068635-20), the Pacific Institute for the
Mathematical Sciences (PIMS), UW eScience Institute and the Pioneer
Center for SMARTbiomed.  Kara Rudolph was supported by a National
Institute on Drug Abuse (NIDA) Award (R01DA053243).

\backmatter
\bibliographystyle{plainnat}
\bibliography{refs}
\label{lastpage}

\clearpage
\section*{Supplementary Materials for\\ \titlepaper}
\addcontentsline{toc}{section}{Supplementary Materials}
% Reset counters so the supplement numbering matches compiling sm.tex separately.
\setcounter{section}{0}
\setcounter{equation}{0}
\setcounter{theorem}{0}
\setcounter{lemma}{0}
\setcounter{coro}{0}
\setcounter{proposition}{0}
\setcounter{definition}{0}
\setcounter{assumption}{0}
\setcounter{assumptioniden}{0}
\setcounter{example}{0}
% Keep hyperref anchor names unique after resetting printed counters.
\renewcommand{\theHsection}{supp.\arabic{section}}
\renewcommand{\theHequation}{supp.\arabic{equation}}
\renewcommand{\theHtheorem}{supp.\arabic{theorem}}
\renewcommand{\theHlemma}{supp.\arabic{lemma}}
\renewcommand{\theHcoro}{supp.\arabic{coro}}
\renewcommand{\theHproposition}{supp.\arabic{proposition}}
\renewcommand{\theHdefinition}{supp.\arabic{definition}}
\renewcommand{\theHassumption}{supp.\arabic{assumption}}
\renewcommand{\theHassumptioniden}{supp.\arabic{assumptioniden}}
\renewcommand{\theHexample}{supp.\arabic{example}}

\section{Identification (Theorem \ref{theo:iden})}
\begin{proof}
  Recall that for fixed values $\bar a_t$, $\bar h_t$, we denote
  $a_t^\d=\d(\bar a_t, h^\d_t)$, where
  $h^\d_t=(\bar a_{t-1}^\d, \bar l_t)$.  Define the random variable
  \[ Z_t(\bar a_t)= Y\{\bar a_t^d, \underline d_{t+1}(\bar a_t,
   \underline  A_{t+1}(\bar a_t, \underline A_{t+1}^d), H_{t+1}(\bar a_t, \underline A_{t+1}^d))\}.
  \]
Note that $Z_0(a_0)=Y(\bar A^\d)$,
  and that $Z_t(\bar a_t)=Z_{t+1}(\bar a_{t+1})$ in the event
  $A_{t+1}=a_{t+1}$  and $H_{t+1} = h_{t+1}^\d$, and that
  $Z_\tau(\bar a_\tau) = Y$ in the event $A_\tau=a_\tau$ and
  $H_\tau = h_\tau^\d$. We
  have
  \begin{align}
    E[Y(\bar A^\d)] & =\int_{\mathcal A_1, \mathcal L_1} \E[Z_0(a_0)\mid A_1=a_1, L_1=l_1]\dd
               \P(a_1,l_1)\notag\\
             & =\int_{\mathcal A_1, \mathcal L_1} \E[Z_1(a_1)\mid A_1=a_1, L_1=l_1]\dd
               \P(a_1,l_1)\notag\\
                    & =\int_{\mathcal A_1, \mathcal L_1} \E[Z_1(a_1)\mid A_1=a_1^\d, L_1=l_1]\dd
                      \P(a_1,l_1)\notag\\
             & =\int_{\mathcal {\bar A}_2, \mathcal {\bar L}_2}
               \E[Z_1(a_1)\mid A_2=a_2,
               L_2=l_2,A_1=a_1^\d,L_1=l_1]\dd \P(a_2,l_2\mid a_1^\d,
               l_1)\dd \P(a_1,l_1)\notag\\
                    & =\int_{\mathcal {\bar A}_2, \mathcal {\bar L}_2}
                      \E[Z_2(\bar a_2)\mid A_2=a_2^\d,
                      L_2=l_2,A_1=a_1^\d,L_1=l_1]\dd \P(a_2,l_2\mid a_1^\d,
                      l_1)\dd \P(a_1,l_1)\notag\\
                    & =\int_{\mathcal {\bar A}_2, \mathcal {\bar L}_2}
                      \E[Z_2(\bar a_2)\mid A_2=a_2^\d,
                      H_2=h_2^\d]\dd \P(a_2,l_2\mid a_1^\d,
                      l_1)\dd \P(a_1,l_1)\notag\\
             &\,\,\, \vdots\notag\\
             & =\int_{\mathcal {\bar A}_\tau, \mathcal {\bar L}_\tau}
               \E[Z_\tau(\bar a_{\tau})\mid A_\tau=a_\tau^\d,
               H_\tau=h_\tau^\d] \prod_{k=1}^\tau\dd\P(a_k,l_k\mid
               a_{k-1}^\d, h_{k-1}^\d)\notag\\
             & =\int_{\mathcal {\bar A}_\tau, \mathcal {\bar L}_\tau} \E[Y\mid A_\tau=a_\tau^\d,
               H_\tau=h_\tau^\d] \prod_{k=1}^\tau\dd\P(a_k,l_k\mid
               a_{k-1}^\d, h_{k-1}^\d)\label{eq:eq1}
  \end{align}
  Where the first and fourth equalities follow by law of iterated
  expectation.
\end{proof}

\begin{lemma}\label{lemma:indep}
  Under Assumption~\ref{ass:exch}, we have $Z_t(\bar a_{t-1})\indep
  A_t\mid H_t$.
\end{lemma}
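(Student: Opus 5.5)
The plan is to write $Z_t(\bar a_{t-1})$ explicitly as a deterministic function of exogenous errors and observed history. Then Assumption~\ref{ass:exch} gives the conditional independence directly, with no iterated expectations.

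I read $Z_t(\bar a_{t-1})$ by analogy with the definition of $Z_t(\bar a_t)$ in the proof of Theorem~\ref{theo:iden}. Treatments before time $t$ are fixed at $\bar a^\d_{t-1}$, and the natural history entering $\d_t,\d_{t+1},\ldots$ begins with the fixed values $\bar a_{t-1}$. From time $t$ onward, treatment at each time $k\ge t$ is assigned by $\d_k$ applied to $(\bar a_{t-1}, A_t,\ldots,A_k)$ evaluated under the counterfactual history so far. The first gap in this reading is that the time-$t$ input enters only as an argument of the recursion. I will therefore treat it as a free argument $a_t$, so that $Z_t(\bar a_{t-1})$ is viewed as $a_t\mapsto$ the composite counterfactual outcome.

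\emph{Step 1 (structural representation).} I will unroll the structural equations $L_k=f_{L_k}(\cdot)$, $A_k=f_{A_k}(\cdot)$ and $Y=f_Y(\cdot)$ forward from time $t$. For each fixed time-$t$ treatment argument, the recursion produces every counterfactual $L_{k}$, $k>t$, every natural value $A_k$, $k>t$, every intervened value $A_k^\d$, and finally $Y$. Each of these is a deterministic function of three ingredients: the history $H_t$, the treatment input at time $t$, and the future errors $W_{t}:=(\underline U_{L,t+1},\underline U_{A,t+1},U_Y)$. The observed history can stand in for the counterfactual history because all variables before time $t$ are evaluated at the fixed values $\bar a^\d_{t-1}$. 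On the event that the observed treatments equal those values, consistency of the structural model makes the counterfactual $L$'s coincide with the observed ones. I will then use the dependence of $Z_t(\bar a_{t-1})$ on this history only as a function argument, so that it is a measurable map $\Phi_t(h_t,\cdot\,,W_t)$. The key point is that $U_{A,t}$ does not appear in $\Phi_t$ except through the time-$t$ input, because $\d_k$ for $k>t$ only reads treatment values.

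\emph{Step 2 (independence).} Given Step 1, $Z_t(\bar a_{t-1})$ conditional on $H_t=h_t$ is a function of $W_t$ alone, once the time-$t$ argument is fixed. Meanwhile $A_t=f_{A_t}(h_t,U_{A,t})$ is a function of $U_{A,t}$ alone. Assumption~\ref{ass:exch} gives $U_{A,t}\indep(\underline U_{L,t+1},\underline U_{A,t+1})\mid H_t$. I will absorb $U_Y$ into $\underline U_{L,\tau+1}$, treating $Y$ as the last ``$L$'' node, so the assumption covers all of $W_t$. Measurable functions of conditionally independent variables are conditionally independent, which yields $Z_t(\bar a_{t-1})\indep A_t\mid H_t$.

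\emph{Main obstacle.} The hardest part is Step 1, specifically the claim that $Z_t(\bar a_{t-1})$ does not depend on $U_{A,t}$ except through the time-$t$ treatment input. I will first verify this by induction on $k=t+1,\ldots,\tau$. At each stage, the counterfactual $L_k$ and $A_k$ are functions of the previously generated quantities and of $U_{L,k}$ and $U_{A,k}$ only. The rule $\d_k$ then introduces dependence only on treatment values, never on $U_{A,t}$ directly. If the paper's $Z_t(\bar a_{t-1})$ is meant to include the natural value $A_t$ itself as its time-$t$ input, the argument has to be made pointwise in that input. In that case I would state the conclusion in the form used in the fourth step of the proof of Theorem~\ref{theo:iden}: $\E[Z_t(\bar a_{t-1})\mid A_t=a_t,H_t]$ does not depend on $a_t$ through $U_{A,t}$.
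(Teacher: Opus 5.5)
Your proposal is correct and follows the paper's route. The paper's one-line proof observes that, under the NPSEM, $Z_t$ is a deterministic function of the future errors $(\underline U_{L,t+1},\underline U_{A,t+1})$, and then applies Assumption~\ref{ass:exch} together with $A_t=f_{A_t}(H_t,U_{A,t})$. Your extra steps make explicit what the paper leaves implicit: holding the time-$t$ natural value fixed, using consistency to replace the counterfactual past by $H_t$ on the event $H_t=h^\d_t$ (the only event on which the lemma is used in the proof of Theorem~\ref{theo:iden}), and treating $U_Y$ as $U_{L,\tau+1}$. Your closing fallback is unnecessary, since with the natural $A_t$ plugged in as the time-$t$ input the independence would generally fail.
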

\begin{proof}
  This lemma follows trivially after noticing that under the assumed
  NPSEM, $Z_t(\bar a_{t-1})$ is a deterministic function
  of $(\underline U_{L,t+1}, \underline U_{A,t+1})$.
\end{proof}

\section{Efficient influence functions (Corollary \ref{coro:eif})}
\begin{proof}
  In this proof we will use $\Theta(\P)$ to denote a parameter
  as a functional that maps the distribution $\P$ in the model to a
  real number.  A function $\D_1(Z;\eta)$ is the EIF of a
  parameter functional $\Theta(P)$ if it satisfies
  \begin{equation}
    \frac{\dd}{\dd  \epsilon}\Theta(P_\epsilon)\bigg|_{\epsilon=0} =
    \E[\D_1(Z;\eta) s(Z)],\label{eq:defeif}
  \end{equation}
  where $\P_\epsilon$ is a smooth parametric submodel that locally
  covers the non-parametric model, with score
  \[s(Z)=\left(\frac{\dd \log \P_\epsilon}{\dd
        \epsilon}\right)\bigg|_{\epsilon = 0}\] such that
  $\P_{\epsilon=0}=\P$. We start by conjecturing an EIF, and then
  prove it is in fact the EIF using the above definition.
  
  To conjecture an EIF we find the EIF in a model where we assume that
  the measure $\nu$ is discrete so that integrals can be written as
  sums.  Recall that for fixed values $\bar a_t$, $\bar h_t$, we
  denote $a_t^\d=\d(\bar a_t, h^\d_t)$, where
  $h^\d_t=(\bar a_{t-1}^\d, \bar l_t)$. We can express the
  non-parametric MLE of $\theta$ as
  \begin{equation}
    \Theta(\Pn)=\sum_{\bar a_\tau, \bar l_{\tau+1}}l_{\tau+1}\Pn
    f_{l_{\tau+1},a_\tau^d,h_\tau^d}\prod_{k=1}^\tau\frac{\Pn f_{a_k,
        l_k, a_{k-1}^d, h_{k-1}^d}}{\Pn f_{a_k^d, h_k^d}}\label{nonpest},
  \end{equation}
  where we remind the reader of the notation $\P f =\int f
  \dd\P$. Here $f_{x}(X)=\one(X=y)$ for any variable $X$ and value
  $x$.  In the above display we used the assumption that $\d$ does not
  depend on $\P$, and therefore does not need to be estimated.

  We will use the fact that the efficient influence function in a
  non-parametric model corresponds with the influence curve of the
  NPMLE. This is true because the influence curve of any regular
  estimator is also a gradient, and a non-parametric model has only
  one gradient. The Delta method shows that if $\hat \Theta(\Pn)$ is a
  substitution estimator such that $\theta=\hat \Theta(\P)$, and
  $\hat \Theta(\Pn)$ can be written as
  $\hat \Theta^*(\Pn f:f\in\mathcal{F})$ for some class of functions
  $\mathcal{F}$ and some mapping $\Theta^*$, then the influence
  function of $\hat \Theta(\Pn)$ is equal to
  \[\dr_\P(Z)=\sum_{f\in\mathcal{F}}\frac{\dd\hat \Theta^*}{\dd\P f}\{f(O)-\P f\}.\]

  Applying this result to (\ref{nonpest}) with
  $\mathcal{F}=\{f_{l_{\tau+1},a_\tau^d,h_\tau^d}, f_{a_s, l_s, a_{s-1}^d, h_{s-1}^d},
  f_{a_s^d, h_s^d}: l_{\tau+1}, a_\tau, h_\tau; s=1,\ldots,\tau\}$ and
  rearranging terms gives $\dr_P(Z) = \D_1(Z;\eta)$. In particular we have
  \begin{align*}
    \frac{\dd \Theta(\P)}{\dd \P f_{l_{\tau+1},a_\tau^d,h_\tau^d}} &=
                                                              l_{\tau+1}\prod_{k=1}^\tau\frac{\P f_{a_k,
                                                              l_k, a_{k-1}^d, h_{k-1}^d}}{\P f_{a_k^d, h_k^d}},
  \end{align*}
  and 
{\small  \begin{align*}
    &\frac{\dd \Theta(\P)}{\dd \P f_{a_s,
    l_s, a_{s-1}^d, h_{s-1}^d}}
    = \sum_{l_{\tau+1},
      \underline a_{s+1},
      \underline
      h_{s+1}}l_{\tau+1}\frac{\P
      f_{l_{\tau+1},a_\tau^d,h_\tau^d}}{\P f_{a_s,
      l_s, a_{s-1}^d, h_{s-1}^d}}\prod_{k=1}^\tau\frac{\P f_{a_k,
      l_k, a_{k-1}^d, h_{k-1}^d}}{\P f_{a_k^d, h_k^d}}\\
    &= \left(\sum_{\underline a_{s+1},
      \underline
      h_{s+1}}m_\tau(a_\tau^d, h_\tau^d)\frac{\P
      f_{a_\tau^d,h_\tau^d}}{\P
      f_{a_s^d,h_s^d}}\prod_{k=s+1}^\tau\frac{\P f_{a_k,
      l_k, a_{k-1}^d, h_{k-1}^d}}{\P f_{a_k^d, h_k^d}}\right)\prod_{k=1}^{s-1}\frac{\P f_{a_k,
      l_k, a_{k-1}^d, h_{k-1}^d}}{\P f_{a_k^d, h_k^d}}\\
    &= \left(\sum_{\substack{a_{s+1},\ldots,a_{\tau-1}\\
      l_{s+1}, \ldots, l_{\tau-1}}}\left[\sum_{a_\tau, l_\tau}m_\tau(a_\tau^d, h_\tau^d)\frac{\P f_{a_\tau,
      l_\tau, a_{\tau-1}^d, h_{\tau-1}^d}}{\P f_{a_{\tau-1}^d, h_{\tau-1}^d}}\right]\frac{\P
      f_{a_{\tau-1}^d,h_{\tau-1}^d}}{\P
      f_{a_s^d,h_s^d}}\prod_{k=s+1}^{\tau-1}\frac{\P f_{a_k,
      l_k, a_{k-1}^d, h_{k-1}^d}}{\P f_{a_k^d, h_k^d}}\right)\prod_{k=1}^{s-1}\frac{\P f_{a_k,
    l_k, a_{k-1}^d, h_{k-1}^d}}{\P f_{a_k^d, h_k^d}}\\
    &= \left(\sum_{\substack{a_{s+1},\ldots,a_{\tau-1}\\
    l_{s+1}, \ldots, l_{\tau-1}}}\left[m_{\tau-1}(\bar a_{\tau-1}, a_{\tau-1}^d, h_{\tau-1}^d)\right]\frac{\P
    f_{a_{\tau-1}^d,h_{\tau-1}^d}}{\P
    f_{a_s^d,h_s^d}}\prod_{k=s+1}^{\tau-1}\frac{\P f_{a_k,
    l_k, a_{k-1}^d, h_{k-1}^d}}{\P f_{a_k^d, h_k^d}}\right)\prod_{k=1}^{s-1}\frac{\P f_{a_k,
    l_k, a_{k-1}^d, h_{k-1}^d}}{\P f_{a_k^d, h_k^d}}\\
    &= m_s(\bar a_s, a_s^d, h_s^d)\prod_{k=1}^{s-1}\frac{\P f_{a_k,
      l_k, a_{k-1}^d, h_{k-1}^d}}{\P f_{a_k^d, h_k^d}}, 
  \end{align*}}
  as well as
  \begin{align*}
    \frac{\dd \Theta(\P)}{\dd \P f_{a_s^d,h_s^d}} &= -\sum_{l_{\tau+1},
                                                \underline a_{s+1},
                                                \underline
                                                    h_{s+1}}l_{\tau+1}\frac{\P
                                                    f_{l_{\tau+1},
                                                    a_\tau^d, h_\tau^d}}{\P f_{a_s^d, h_s^d}}\prod_{k=1}^\tau\frac{\P f_{a_k,
                                                    l_k, a_{k-1}^d, h_{k-1}^d}}{\P f_{a_k^d, h_k^d}}\\
    &=-\sum_{l_{\tau+1},
    \underline a_{s+1},
    \underline
    h_{s+1}}l_{\tau+1}\frac{\P
    f_{l_{\tau+1},
    a_\tau^d, h_\tau^d}}{\P f_{a_s^d, h_s^d}}\prod_{k=s+1}^\tau\frac{\P f_{a_k,
    l_k, a_{k-1}^d, h_{k-1}^d}}{\P f_{a_k^d, h_k^d}}\prod_{k=1}^s\frac{\P f_{a_k,
      l_k, a_{k-1}^d, h_{k-1}^d}}{\P f_{a_k^d, h_k^d}}\\
                                                  &= -m_s(\bar a_s, a_s^d, h_s^d)\prod_{k=1}^{s}\frac{\P f_{a_k,
                                                    l_k, a_{k-1}^d, h_{k-1}^d}}{\P f_{a_k^d, h_k^d}}.
  \end{align*}
  Notice now that
  {\small \begin{align*}
            \sum_{l_{\tau+1},a_\tau,h_\tau}&\frac{\dd \Theta(\P)}{\dd
                                             f_{l_{\tau+1},a_\tau^d,h_\tau^d}}
                                             f_{l_{\tau+1},a_\tau^d,h_\tau^d}(Z) \\
                                           &=\sum_{a_\tau,h_\tau}\one\{A_\tau
                                             = a_\tau^d, H_\tau =
                                       h_\tau^d\}L_{\tau+1}\prod_{k=1}^\tau\frac{\P(A=a_k,L_k=
                                       l_k, A_{k-1}=a_{k-1}^d,
                                       H_{k-1}=h_{k-1}^d)}{\P(A_k=a_k^d,
                                       H_k=h_k^d)}\\
                                     &=L_{\tau+1}\sum_{\bar
                                       a_\tau}\one\{A_\tau = d(\bar
                                       a_\tau, H_\tau),\ldots,
                                       A_1=d(a_1,
                                       L_1)\}\prod_{k=1}^\tau\frac{g_k(a_k
                                       \mid H_k)}{g_k(A_k\mid
                                       H_k)}\\
                                     &=L_{\tau+1}\frac{\bar
                                       g_\tau^d(A_\tau, H_\tau)}{\bar
                                       g_\tau(A_\tau, H_\tau)},
         \end{align*}}
  as well as
  {\small  
\begin{align*}
&\sum_{a_s,h_s}\frac{\dd \Theta(\P)}{\dd \P f_{a_s,l_s, a_{s-1}^d,
            h_{s-1}^d}} f_{a_s, l_s, a_{s-1}^d, h_{s-1}^d}(Z) \\
        &=\sum_{\bar a_{s-1}}\one\{A_{s-1} = d(\bar a_{s-1},
H_{s-1}),\ldots, A_1=d(a_1, L_1)\}m_s((\bar a_{s-1}, A_s),
                             d((\bar a_{s-1},
                             A_s), H_s), H_s)\prod_{k=1}^{s-1}\frac{g_k(a_k
                             \mid
                             H_k)}{g_k(A_k\mid 
                             H_k)},
           \end{align*}}%
and
{\small
  \begin{align*}
    \sum_{a_s,h_s}&\frac{\dd \Theta(\P)}{\dd \P f_{a_s^d,
                    h_s^d}} f_{a_s^d, h_s^d}(Z) \\
                  &=-\sum_{\bar a_s}\one\{A_s = d(\bar a_s,
                    H_s),\ldots, A_1=d(a_1, L_1)\}m_s(\bar a_s,
                    A_s, H_s)\prod_{k=1}^s\frac{g_k(a_k
                    \mid
                    H_k)}{g_k(A_k\mid 
                    H_k)},
  \end{align*}}%
Putting these results together and summing over $s$ yields the
conjectured $\dr_P(Z)=\D_1(Z;\eta)-\Theta(P)$. Then, we can confirm that
$\D_1(Z;\eta)$ satisfies (\ref{eq:defeif}). To see this, note that
Lemma~\ref{theo:sdr} implies that the functional satisfies the
following expansion:
\[ \Theta(\P_\epsilon) = \Theta(\P) + \int \{\D_1(Z;\eta) - \Theta(\P)\} \dd\P_\epsilon - \rem_0(\eta_\epsilon,\eta),\]
  Differentiating with respect to $\epsilon$ and evaluating at
  $\epsilon=0$ yields
  \begin{align*}
    \frac{\dd }{\dd \epsilon}\Theta(\P_\epsilon)\bigg|_{\epsilon = 0}
    &= \int\{\D_1(Z;\eta) - \Theta(\P)\}
      \left(\frac{\dd  \P_\epsilon}{\dd \epsilon}\right)\bigg|_{\epsilon
      = 0} - \frac{\dd }{\dd
      \epsilon}\rem_0(\eta_\epsilon,\eta)\bigg|_{\epsilon = 0} \\
    &= \int\{\D_1(Z;\eta) - \Theta(\P)\}
      \left(\frac{\dd \log \P_\epsilon}{\dd
      \epsilon}\right)\bigg|_{\epsilon = 0}\dd P - \frac{\dd }{\dd \epsilon}\rem_0(\eta_\epsilon,\eta)\bigg|_{\epsilon = 0},
  \end{align*}
  and the result follows after noticing that
  \[\frac{\dd }{\dd \epsilon}\rem_0(\eta_\epsilon,\eta)\bigg|_{\epsilon = 0}=0.\]
\end{proof}

\section{Sequential double robustness (Lemma \ref{theo:sdr})}
We first prove an auxiliary result.
\begin{proposition} For any function $m_{k,1}(\bar s_k, a_k, h_k)$ define $D_k(\bar s_k)= \one\{A_k=d(\bar s_k,
           H_k)\}$, and 
  \[q_{k,1}(\bar s_{k-1}, A_k, H_k)=m_{k,1}((\bar s_{k-1}, A_k), d((\bar s_{k-1}, A_k), H_k),
  H_k).\] We have {\small \begin{multline} \E[q_{k,1}(s_{k-1}, A_k,
      H_k)\mid A_{k-1}=a_{k-1}, H_{k-1}=h_{k-1}]=\\
      \E\left[\sum_{s_k}D_k(\bar s_k)\frac{g_k(s_k\mid
          H_k)}{g_k(A_k\mid H_k)} m_{k, 1}(\bar s_k, A_k, H_k)\mid A_{k-1}=a_{k-1},
        H_{k-1}=h_{k-1}\right].\end{multline}}%
\end{proposition}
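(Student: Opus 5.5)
The plan is a change-of-measure argument on the discrete variable $A_k$, carried out conditionally on $(L_k, H_{k-1}, A_{k-1})$. Here $\bar s_{k-1}$ is held fixed throughout, and the sum on the right runs over $s_k\in\mathcal A_k$. First I will use the tower property to condition additionally on $L_k$, so that $H_k=(\bar A_{k-1},\bar L_k)$ is fixed apart from $A_k$. The inner conditional law of $A_k$ is then $g_k(\cdot\mid H_k)$; formally $H_k$ contains $A_{k-1}$ and $L_k$, and $A_k$ is not part of $H_k$. Both sides therefore reduce to comparing, for fixed $h_k$, the quantities
\[
\sum_{a\in\mathcal A_k} g_k(a\mid h_k)\, m_{k,1}((\bar s_{k-1},a), d((\bar s_{k-1},a),h_k), h_k)
\]
and
\[
\sum_{a\in\mathcal A_k} g_k(a\mid h_k)\sum_{s_k}\one\{a=d((\bar s_{k-1},s_k),h_k)\}\frac{g_k(s_k\mid h_k)}{g_k(a\mid h_k)}\, m_{k,1}((\bar s_{k-1},s_k), a, h_k).
\]

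In the second expression the factor $g_k(a\mid h_k)$ cancels. This requires $g_k(a\mid h_k)>0$ whenever the indicator is nonzero and $g_k(s_k\mid h_k)>0$, which is exactly what the positivity Assumption~\ref{ass:support} provides: if $s_k$ lies in the support, then $d(\cdot)$ of it does too. After cancelling, I will exchange the order of summation. For each $s_k$, the indicator picks out the single value $a=d((\bar s_{k-1},s_k),h_k)$. The expression then becomes
\[
\sum_{s_k} g_k(s_k\mid h_k)\, m_{k,1}((\bar s_{k-1},s_k), d((\bar s_{k-1},s_k),h_k),h_k),
\]
which is the first expression after renaming $s_k$ as $a$. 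Integrating back over $L_k$ given $(A_{k-1},H_{k-1})$ yields the claimed identity.

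The only delicate step is the support bookkeeping. Terms with $g_k(s_k\mid h_k)=0$ contribute zero on both sides, and terms with $g_k(a\mid h_k)=0$ must be excluded on the right, which positivity guarantees. The argument must also respect the paper's convention that $m_{k,1}$ takes the natural-history vector $\bar s_k$ separately from the treatment argument. Everything else is a finite sum interchange.
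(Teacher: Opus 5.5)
Your proposal is correct and follows the paper's proof. Both write the conditional expectation as a sum over $(a_k,l_k)$ against $g_k(a_k\mid h_k)\,p(l_k\mid a_{k-1},h_{k-1})$, cancel $g_k(a_k\mid h_k)$, let the indicator collapse $a_k$ to $d(\bar s_k,h_k)$, and rename $s_k$ as $a_k$ to recognize $q_{k,1}$. The paper cancels $g_k(a_k\mid h_k)$ without comment, so your explicit treatment of zero-probability treatment values is an improvement. One correction: Assumption~\ref{ass:support} is phrased along intervened histories, so it does not literally provide what you use. What you actually need, and should state directly, is $g_k(s_k\mid h_k)>0\Rightarrow g_k(d(\bar s_k,h_k)\mid h_k)>0$ for the $(\bar s_{k-1},h_k)$ at hand.
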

\begin{proof}
  The right hand side can be written as
  {\small\begin{align*}
           \sum_{a_k, l_k}&\sum_{s_k}\one\{a_k=d(\bar s_k,
           h_k)\}\frac{g_k(s_k\mid
           h_k)}{g_k(a_k\mid h_k)}
           m_{k,1}(\bar s_{k}, a_{k},
           h_{k})g_k(a_k\mid h_k)p(l_k\mid
           a_{k-1}, h_{k-1})\\
           &=\sum_{a_{k}, l_{k}}\sum_{s_{k}}\one\{a_{k}=d(\bar  s_{k},
             h_{k})\}g_{k}(s_{k}\mid
             h_{k}) m_{k,1}(\bar s_{k}, a_{k},
             h_{k})p(l_{k}\mid
             a_{k-1}, h_{k-1})\\
           &=\sum_{l_{k}, s_{k}}g_{k}(s_{k}\mid
             h_{k}) m_{k,1}(\bar s_k, d(\bar s_{k},
             h_{k}),
             h_{k})p(l_{k}\mid
             a_{k-1}, h_{k-1})\\
           &=\sum_{l_{k}, a_{k}}g_{k}(a_{k}\mid
             h_{k}) q_{k,1}(\bar s_{k-1}, a_{k},
             h_{k})p(l_{k}\mid
             a_{k-1}, h_{k-1})\\
           &=     \E[q_{k,1}(\bar s_{k-1}, A_k,
             H_k)\mid A_{k-1}=a_{k-1}, H_{k-1}=h_{k-1}].    
         \end{align*}}
     \end{proof}
     We now proceed with the proof of the main result (Lemma
     \ref{theo:sdr}).
     
\begin{proof}
  This lemma follows from recursive application for
  $k=t+1,\ldots,\tau$ of the following
  relation:
{\small  \begin{align*}
    &\textcolor{blue}{\E[q_{k,1}}\textcolor{blue}{(\bar s_{k-1}, A_k, H_k) -
                                  q_k(\bar s_{k-1}, A_k, H_k)\mid A_{k-1},
    H_{k-1}]} =\\
    &\E\left\{\sum_{s_k}D_k(\bar s_k)\left[\frac{\g_k(s_k\mid
       H_k)}{\g_k(A_k\mid H_k)}-\frac{\g_{k,1}(s_k\mid
       H_k)}{\g_{k,1}(A_k\mid H_k)}\right][m_{k,1}(\bar s_{k}, A_k,
       H_k) - m_k(\bar s_{k}, A_k,
       H_k)]\,\,\bigg|\,\,A_{k-1},H_{k-1}\right\}\\
    -&\E\left\{\sum_{s_k}D_k(\bar s_k)\frac{\g_{k,1}(s_k\mid
               H_k)}{\g_{k,1}(A_k\mid H_k)}[q_{k+1, 1}(\bar s_k, A_{k+1},
         H_{k+1}) - m_{k, 1}(\bar s_k, A_k,
         H_k)]\,\,\bigg|\,\,A_{k-1},H_{k-1}\right\}\\
    +&\E\left\{\sum_{s_k}D_k(\bar s_k)\frac{\g_{k,1}(s_k\mid
       H_k)}{\g_{k,1}(A_k\mid H_k)}\textcolor{blue}{\E[q_{k+1, 1}(\bar s_k,
       A_{k+1},
       H_{k+1}) - q_{k+1}(\bar s_k, A_{k+1},
       H_{k+1})\mid A_k, H_k]}\,\,\bigg|\,\,A_{k-1},H_{k-1}\right\},
  \end{align*}}
which follows from rewriting it as
\begin{align*}
  a-b &= \sum (c-d)(a_1 - b_1)\\
      &- \sum d(a_2-b_2)\\
      &+ \sum d(a_3-b_3),
\end{align*}
after noticing that $\sum d a_2=\sum da_3$, $\sum d b_2=\sum da_1$,
$\sum d b_1=\sum db_3$, $a=\sum c a_1$, and $b=\sum c b_1$, where the
latter two follow from the above proposition.
\end{proof}

\section{Delay intervention with three time points}\label{app:example}

Proposition~\ref{propo:sr} shows that the augmented
history $\bar s_t$ need not be carried in full when the intervention
depends only on selected components of the natural treatment history.
Delay interventions provide a simple illustration. For an absorbing
binary treatment status, a delay by $\kappa$ time points may be written
as
\[
d_t^{(\kappa)}(\bar a_t,h_t^\d)=
\begin{cases}
0, & t\leq \kappa,\\
a_{t-\kappa}, & t>\kappa.
\end{cases}
\]
Thus, the rule at time $t$ does not generally require the entire
history $\bar a_t$, only the corresponding value of treatment $a_{t-\kappa}$. 
Equivalently, the sequential regression implementation only needs to
augment the data by the relevant history. We discuss two examples. 

\begin{example}[Delay by one time point]\label{ex:delay-one}
Consider $\tau=3$ and a binary exposure representing an absorbing
state. Consider an intervention given by
$d_1(a_1, h_1) = 0$, $d_2(\bar a_2, h_2^\d)=a_1$, and
$d_3(\bar a_3, h_3^\d)=a_2$. Then, the sequential regression formula
equals the following:
\begin{align*}
  m_3(A_3, H_3) & = \E[Y\mid A_3, H_3]\\
  q_3(\bar s_2, A_3, H_3) & =
  m_3(d_3(\bar s_2, A_3, H_3), H_3)
  = m_3(s_2, H_3)\\
  m_2(\bar s_2, A_2, H_2) & =
  \E[q_3(\bar s_2, A_3, H_3)\mid A_2, H_2]\\
  q_2(s_1, A_2, H_2) & =
  m_2((s_1, A_2), d_2((s_1, A_2), H_2), H_2)
  = m_2(A_2, s_1, H_2)\\
  m_1(s_1, A_1, H_1) & =
  \E[q_2(s_1, A_2, H_2)\mid A_1, H_1]\\
  q_1(A_1, H_1)&=
  m_1(A_1, d_1(A_1, H_1), H_1)=m_1(A_1, 0, H_1)\\
  \theta &=\E[q_1(A_1, H_1)].
\end{align*}
Although this display is written using the general notation
$\bar s_2=(s_1,s_2)$, the functions used in the recursion do not
depend on all components of $\bar s_2$. Specifically,
$q_3(\bar s_2,A_3,H_3)$ depends on $\bar s_2$ only through $s_2$.
After substituting the intervention at time $2$, $q_2(s_1,A_2,H_2)$
depends on the previous natural value $s_1$ and the current natural
value $A_2$, but not on a jointly augmented pair $(s_1,s_2)$.
Therefore, the one timepoint delay can be implemented by carrying only
one natural-treatment value in the augmented data at each recursion
step, as shown in Figure~\ref{fig:pooled-delay-one}.

This example allows us to also illustrate the canonical gradients, which are equal to
  {\footnotesize  \begin{align*}
                    \psi_3(s_2, Z)&=\sum_{s_3}D_{2,3}(s_2, s_3)\frac{g_3(s_3\mid
                                    H_3)}{g_3(A_3\mid H_3)}(Y - m_3(A_3, H_3))\\
                                  &+
                                    q_3(s_2, A_3, H_3) - m_2( s_2, A_2, H_2)\\
                    \psi_2(s_1, Z)&=\sum_{s_2,s_3}D_{1,3}(s_2, s_3)\frac{g_2(s_2\mid
                                    H_2)}{g_2(A_2\mid H_2)}\frac{g_3(s_3\mid
                                    H_3)}{g_3(A_3\mid H_3)}(Y - m_3(A_3, H_3))\\
                                  &+\sum_{s_2}D_{1,2}(s_1, s_2)\frac{g_2(s_2\mid
                                    H_2)}{g_2(A_2\mid H_2)}(q_3(s_2, A_3, H_3) -
                                    m_2(s_2, A_2, H_2))\\
                                  &+q_2(s_1, A_2, H_2) - m_1(s_1, A_1, H_1)\\
                    \psi_1(Z)&= \sum_{s_2, s_3}D_{0,3}(s_2, s_3)\frac{g_1(s_1\mid
                               H_1)}{g_1(A_1\mid H_1)}\frac{g_2(s_2\mid
                               H_2)}{g_2(A_2\mid H_2)}\frac{g_3(s_3\mid
                               H_3)}{g_3(A_3\mid H_3)}(Y - m_3(A_3, H_3))\\
                                  &+\sum_{s_1,s_2}D_{0,2}(s_1, s_2)\frac{g_1(s_1\mid
                                    H_1)}{g_1(A_1\mid H_1)}\frac{g_2(s_2\mid
                                    H_2)}{g_2(A_2\mid H_2)}(q_3(\bar s_2, A_3, H_3) -
                                    m_2(\bar s_2, A_2, H_2))\\
                                  &+\sum_{s_1}D_{0,1}(s_1)\frac{g_1(s_1\mid
                                    H_1)}{g_1(A_1\mid H_1)}(q_1(s_1, A_2, H_2) -
                                    m_1(s_1, A_1, H_1))\\
                                  &+ q_1(A_1, H_1) - \theta.
                  \end{align*}   }   
              \end{example}    
\begin{figure}[H]
\scriptsize
\textit{Step 1.} Estimate $m_3(a_3,h_3)$ by regressing $Y$ on
$(A_3,H_3)$. Since $q_3$ depends only on $s_2$, create an augmented
dataset with one additional column $s_2$.
\[
\begin{array}{ccccc}
i & A_1 & A_2 & A_3 & Y \\
\hline
1 & a_{1,1} & a_{1,2} & a_{1,3} & y_1\\
\vdots & \vdots & \vdots & \vdots & \vdots \\
n & a_{n,1} & a_{n,2} & a_{n,3} & y_n\\
\end{array}
\mapsto
\begin{array}{cccccc}
i & A_1 & A_2 & A_3 & s_2 & Y \\
\hline
1 & a_{1,1} & a_{1,2} & a_{1,3} & 0 & y_1\\
1 & a_{1,1} & a_{1,2} & a_{1,3} & 1 & y_1\\
\vdots & \vdots & \vdots & \vdots & \vdots & \vdots \\
n & a_{n,1} & a_{n,2} & a_{n,3} & 1 & y_n\\
\end{array}
\]
\textit{Step 2.} Generate $q_3(s_2,A_3)=m_3(s_2,H_3)$ and estimate
$m_2(s_2,A_2,H_2)$ by regressing $q_3(s_2,A_3)$ on
$(s_2,A_2,H_2)$.
\[
\begin{array}{ccccc}
i & A_1 & A_2 & s_2 & q_3(s_2,A_3) \\
\hline
1 & a_{1,1} & a_{1,2} & 0 & q_3(0,a_{1,3})\\
1 & a_{1,1} & a_{1,2} & 1 & q_3(1,a_{1,3})\\
\vdots & \vdots & \vdots & \vdots & \vdots \\
n & a_{n,1} & a_{n,2} & 1 & q_3(1,a_{n,3})
\end{array}
\]
\textit{Step 3.} To estimate $m_1$, augment the data with an additional column $s_1$. Generate
$q_2(s_1,A_2)=m_2(A_2,s_1,H_2)$ and regress it on
$(s_1,A_1,H_1)$.
\[
\begin{array}{ccccc}
i & A_1 & A_2 & s_1 & q_2(s_1,A_2) \\
\hline
1 & a_{1,1} & a_{1,2} & 0 & q_2(0,a_{1,2})\\
1 & a_{1,1} & a_{1,2} & 1 & q_2(1,a_{1,2})\\
\vdots & \vdots & \vdots & \vdots & \vdots \\
n & a_{n,1} & a_{n,2} & 1 & q_2(1,a_{n,2})
\end{array}
\mapsto
\begin{array}{cc}
i & q_1(A_1) \\
\hline
1 & q_1(a_{1,1})\\
\vdots & \vdots \\
n & q_1(a_{n,1})
\end{array}
\]
\caption{Reduced pooled-regression implementation of the one-timepoint
delay in Example~\ref{ex:delay-one}, with $\tau=3$, $A_t\in\{0,1\}$,
and no covariates. Unlike the fully general implementation in
Figure~\ref{fig:pooled}, this intervention never requires simultaneous
augmentation by $(s_1,s_2)$: only the single natural value that may be
used in a future delayed treatment decision is carried at each step.}
\label{fig:pooled-delay-one}
\end{figure}
              \begin{example}[Delay by two time points]\label{ex:delay-two}
Consider $\tau=3$ and a binary exposure representing an absorbing
state. Consider a delay of two time units, where the intervention is
given by
$d_1(a_1, h_1) = 0$, $d_2(\bar a_2, h_2^\d)=0$, and
$d_3(\bar a_3, h_3^\d)=a_1$. Then, the sequential regression formula
equals the following:
\begin{align*}
  m_3(A_3, H_3) & = \E[Y\mid A_3, H_3]\\
  q_3(\bar s_2, A_3, H_3) & =
  m_3(d_3((\bar s_2, A_3), H_3), H_3)
  = m_3(s_1, H_3)\\
  m_2(\bar s_2, A_2, H_2) & =
  \E[q_3(\bar s_2, A_3, H_3)\mid A_2, H_2]\\
  q_2(s_1, A_2, H_2) & =
  m_2((s_1, A_2), d_2((s_1, A_2), H_2), H_2)
  = m_2((s_1,A_2), 0, H_2)\\
  m_1(s_1, A_1, H_1) & =
  \E[q_2(s_1, A_2, H_2)\mid A_1, H_1]\\
  q_1(A_1, H_1)&=
  m_1(A_1, d_1(A_1, H_1), H_1)=m_1(A_1, 0, H_1)\\
  \theta &=\E[q_1(A_1, H_1)].
\end{align*}
A two timepoint
delay requires the intervention at time $3$ to only remember the natural
treatment value at timepoint $1$. Similar to the illustration in Figure \ref{fig:pooled-delay-one}, this implies that full data augmentation by $\bar s_2$ is not necessary, augmenting the data with a single column $s_1$ is sufficient.

\end{example}

\subsection{Time horizon considerations.}
When natural treatment initiation occurs close to the end of follow-up, the
delayed treatment initiation time may fall outside the study horizon.
For example, in a vaccine study with $\tau=10$ monthly decision times and a five-month
delay, individuals naturally vaccinated in months $6,\ldots,10$ would
remain unvaccinated throughout the observed ten-month horizon under
the delayed intervention. This does not create a problem with our technical framework, but it may affect interpretation because, for
late vaccines, the delayed intervention is 
equivalent within the study horizon to no vaccination. 
%\bibliographystyle{plainnat}
%\bibliography{refs}

\end{document}